\renewcommand\footnotetextcopyrightpermission[1]{} 
\author{Benjamin Berg}
\affiliation{%
\institution{Carnegie Mellon University}
}
\email{bsberg@cs.cmu.edu}
\author{Rein Vesilo}
\affiliation{%
\institution{Macquarie University}
}
\email{rein.vesilo@mq.edu.au}
\author{Mor Harchol-Balter}
\affiliation{%
\institution{Carnegie Mellon University}
}
\email{harchol@cs.cmu.edu}
\title{heSRPT: Optimal Parallel Scheduling of Jobs With Known Sizes}
\date{}
\begin{document}
\begin{abstract}
When parallelizing a set of jobs across many servers, one must balance a trade-off between granting priority to short jobs and maintaining the overall efficiency of the system.
When the goal is to minimize the mean flow time of a set of jobs, it is usually the case that one wants to complete short jobs before long jobs.
However, since jobs usually cannot be parallelized with perfect efficiency, granting strict priority to the short jobs can result in very low system efficiency which in turn hurts the mean flow time across jobs.

In this paper, we derive the optimal policy for allocating servers to jobs at every moment in time in order to minimize mean flow time across jobs.
We assume that jobs follow a sublinear, concave speedup function, and hence jobs experience diminishing returns from being allocated additional servers.
We show that the optimal policy, heSRPT, will complete jobs according to their size order, but maintains overall system efficiency by allocating some servers to each job at every moment in time.
We compare heSRPT with state-of-the-art allocation policies from the literature and show that heSRPT outperforms its competitors by at least 30\%, and often by much more.
\end{abstract}

\maketitle
\clearpage
\section{Introduction}
Nearly all modern data centers serve workloads which are capable of exploiting parallelism.  
When a job parallelizes across multiple servers it will complete more quickly.
However, it is unclear how best to share a limited number of servers between many parallelizable jobs.

In this paper we consider a typical scenario where a data center composed of $N$ servers will be tasked with completing a set of $M$ parallelizable jobs, where typically $M$ is much smaller than $N$.
In our scenario, each job has a different inherent size (service requirement) which is known up front to the system.
In addition, each job can be run on any number of servers at any moment in time.
These assumptions are reasonable for many parallelizable workloads such as training neural networks using TensorFlow \cite{abadi2016tensorflow,lin2018model}. 
Our goal in this paper is to allocate servers to jobs so as to minimize the \emph{mean flow time} across all jobs, where the flow time of a job is the time until the job leaves the system.\footnote{We will also consider the problem of minimizing \emph{makespan}, but this turns out to be fairly easy and thus we defer that discussion.}
What makes this problem difficult is that jobs receive a concave, sublinear speedup from parallelization -- jobs have a decreasing marginal benefit from being allocated additional servers (see Figure \ref{fig:speedups}).
Hence, in choosing a job to receive each additional server, one must keep the overall efficiency of the system in mind.
The goal of this paper is to determine the optimal allocation of servers to jobs where all jobs follow a realistic sublinear speedup function.

It is clear that the optimal allocation policy will depend heavily on the jobs' speedup -- how parallelizable the jobs being run are.  
To see this, first consider the case where jobs are \emph{embarrassingly parallel}.  
In this case, we observe that the entire data center can be viewed as a \emph{single server} that can be perfectly utilized by or shared between jobs.
Hence, from the single server scheduling literature, it is known that the Shortest Remaining Processing Time policy (SRPT) will minimize the mean flow time across jobs \cite{smith1978new}.
By contrast, if we consider the case where jobs are hardly parallelizable, a single job receives very little benefit from additional servers.
In this case, the optimal policy is to divide the system equally between jobs, a policy called EQUI.  
In practice, a realistic speedup function usually lies somewhere between these two extremes and thus we must balance a trade-off between the SRPT and EQUI policies in order to minimize mean flow time.  
Specifically, since jobs are \emph{partially} parallelizable, it is still beneficial to allocate more servers to smaller jobs than to large jobs.  
The optimal policy with respect to mean flow time must split the difference between these policies, figuring out how to favor short jobs while still respecting the overall efficiency of the system.

\begin{figure}[ht]
\centering
\includegraphics[width=.45\textwidth]{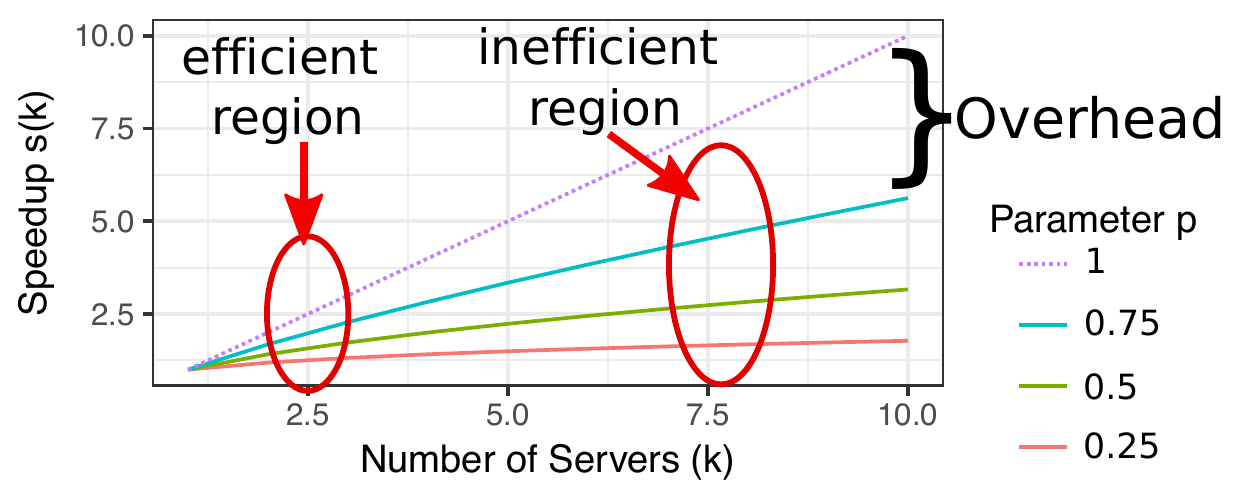}
    \caption{A variety of speedup functions of the form $s(k)=k^p$, shown with varying values of $p$.
    When $p=1$ we say that jobs are \emph{embarrassingly parallel}, and hence we consider cases where $0 < p <1$.
    Note that all functions in this family are concave and lie below the embarrassingly parallel speedup function ($p=1$).}

\label{fig:speedups}
\end{figure}

\subsection*{Prior Work}
Despite the prevalence of parallelizable data center workloads, it is not known, in general, how to optimally allocate servers across a set of parallelizable jobs.  
The state-of-the-art in production systems is to let the user decide their job's allocation by reserving the resources they desire \cite{verma2015large}, and then allowing the system to pack jobs onto servers \cite{ren2016clairvoyant}.
This allows users to reserve resources greedily, and can lead to low system efficiency.
We seek to improve upon the status quo by allowing the system to choose the allocation of servers to each job.

The closest work to the results presented in this paper is \cite{lin2018model}, which considers jobs which follow a realistic speedup function and have known, generally distributed sizes.
Similarly to our work, \cite{lin2018model} allows server allocations to change over time.
While \cite{lin2018model} proposes and evaluates some heuristic policies, they make no theoretical guarantee about the performance of their policies.

Other related work from the performance modeling community, \cite{berg2018}, assumes that jobs follow a concave speedup function and allows server allocations to change over time.
However, unlike our work, \cite{berg2018} assumes that job sizes are \emph{unknown} and are drawn from an exponential distribution.
\cite{berg2018} concludes that EQUI is the optimal allocation policy.
However, assuming unknown exponentially distributed job sizes is highly pessimistic since this means job sizes are impossible to predict, even as a job ages.

There has also been work on how to allocate servers to jobs which follow arbitrary speedup functions.
Of this work, the closest to our model is \cite{im2016competitively} which also considers jobs of known size. Inversely, \cite{Edmonds1999SchedulingIT, edmonds2009scalably,Agrawal:2016:SPJ:2935764.2935782} all consider jobs of unknown size.
This work is all through the lens of competitive analysis, which assumes that job sizes, arrival times, and even speedup functions are adversarially chosen.  
This work concludes that a variant of EQUI is $(1+\epsilon)$-speed $o(1)$-competitive with the optimal policy when job sizes are unknown \cite{edmonds2009scalably}, and that a combination of SRPT and EQUI is $o(1)$-competitive when job sizes are known \cite{im2016competitively}.

The SPAA community often models each job as a DAG of interdependent tasks \cite{blumofe1999scheduling,bampis2014note,bodik2014brief,narlikar1999space}.
This DAG encodes precedence constraints between tasks, and thus implies how parallelizable a job is at every moment in time.
Given such a detailed model, it is not even clear how to optimally schedule a \emph{single} DAG job on many servers in order to minimize the job's completion time \cite{chowdhury2013oblivious}.
The problem only gets harder if tasks are allowed to run on multiple servers \cite{du1989complexity,chen2018improved}.
Other prior work considers the scheduling of several DAG jobs onto several servers in order to minimize the mean flow time, makespan, or profit of the set of jobs \cite{chudak1999approximation,hall1997scheduling,chekuri2001approximation,agrawal2018scheduling}.
All of this work is fundamentally different from our model in that it models parallelism in a much more fine-grained way.
Our hope is that by modeling parallelism through the use of speedup functions, we can address problems that would be intractable in the DAG model.

Our model also shares some similarities with the coflow scheduling problem \cite{jahanjou2017asymptotically, chowdhury2014efficient, qiu2015minimizing,shafiee2017brief,khuller2016brief}.  In coflow scheduling, one must allocate a continuously divisible resource, link bandwidth, to a set of network flows to minimize mean flow time.   Unlike our model, there is no explicit notion of a flow's speedup function here.  Given that this problem is NP-Hard, prior work examines the problem via heuristic policies \cite{chowdhury2014efficient}, and approximation algorithms \cite{qiu2015minimizing,jahanjou2017asymptotically}.

\begin{figure}[t]
\includegraphics[width=.47\textwidth]{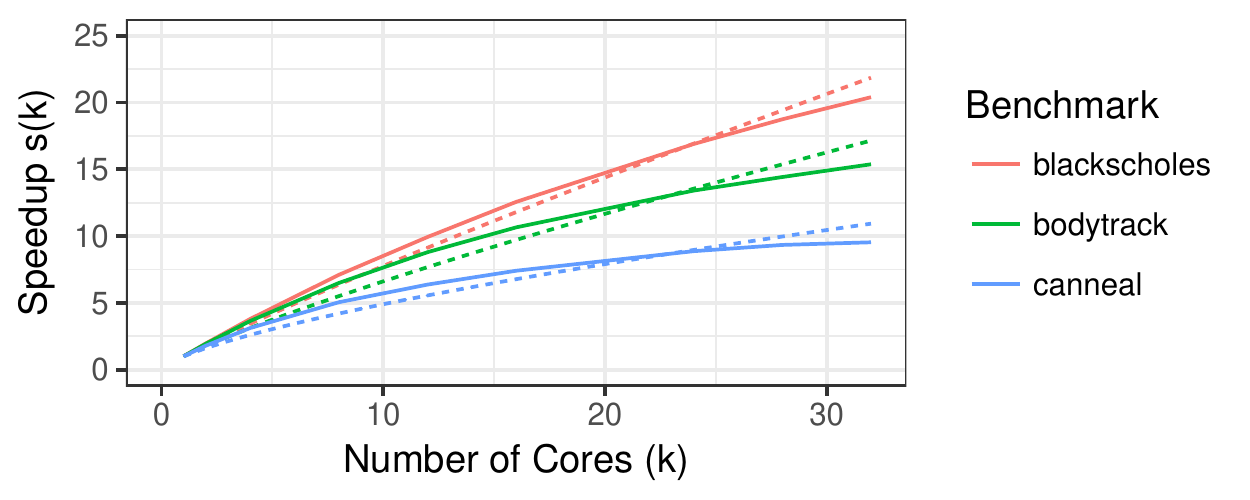}
    \caption{Various speedup functions of the form $s(k)=k^p$ (dotted lines) which have been fit to real speedup curves (solid lines) measured from jobs in the PARSEC-3 parallel benchmarks\cite{zhan2017parsec3}.  The three jobs, blackscholes, bodytrack, and canneal, are best fit by the functions where $p=.89$, $p=.82$, and $p=.69$ respectively.}

\label{fig:parsec}
\end{figure}

\begin{figure*}[ht]
\centering
\includegraphics[width=.75\textwidth]{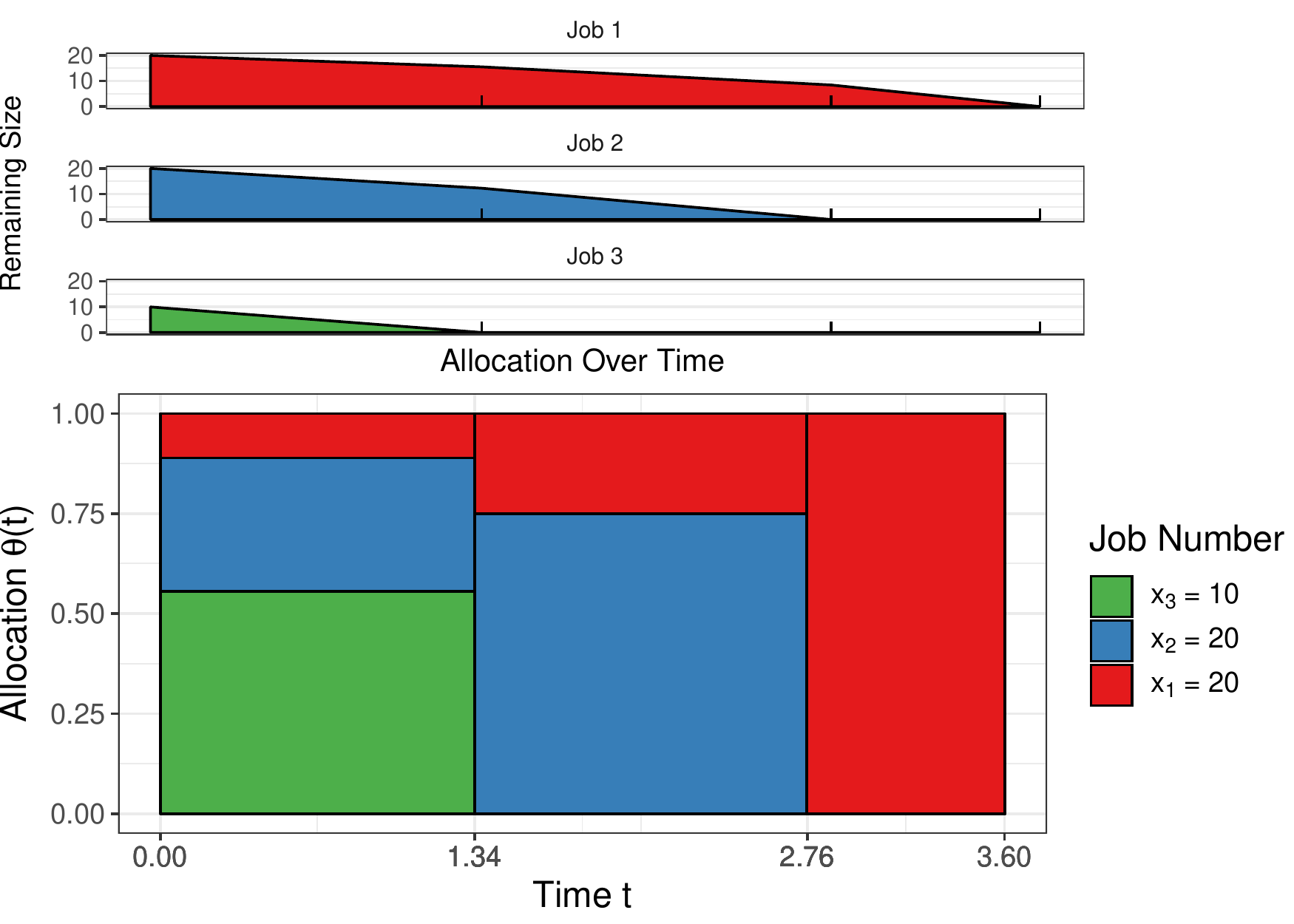}
\label{fig:3jobs}
    \caption{An example of the optimal allocation policy, heSRPT, completing a set of three jobs where the speedup function is $s(k)=k^{.5}$ and $N=500$.
    The top graph shows the remaining size of each job over time, while the bottom graph shows the allocation, $\bm{\theta}(t)$, over time.
    Jobs are finished in shortest job first order, but rather than allocating the whole system to the shortest job, heSRPT optimally shares the system between all active jobs.}
\end{figure*}
\subsection*{Our Model}
Our model assumes that there are $N$ identical servers which must be allocated across a set of $M$ parallelizable jobs.  All $M$ jobs are present at time $t=0$.
Job $i$ is assumed to have some inherent size $x_i$ where, WLOG,
$$x_1 \geq x_2 \geq \ldots \geq x_M.$$

We assume that all jobs follow the \emph{same} speedup function, $s : \mathbb{R}^+ \rightarrow \mathbb{R}^+$, which is of the form
$$s(k) = k^p \qquad$$
for some $0 <p<1$.
Specifically, if a job $i$ of size $x_i$ is allocated $k$ servers, it will complete at time
$$\frac{x_i}{s(k)}.$$
In general, the number of servers allocated to a job can change over the course of the job's lifetime.
It therefore helps to think of $s(k)$ as a \emph{rate}\footnote{WLOG we assume the service rate of a single server to be 1.  More generally, we could assume the rate of each server to be $\mu$, which would simply replace $s(k)$ by $s(k)\mu$ in every formula} of service where the remaining size of job $i$ after running on $k$ servers for a length of time $t$ is
$$x_i - t \cdot s(k).$$
We choose the family of functions $s(k)=k^p$ because they are (i) sublinear and concave, (ii) can be fit to a variety of empirically measured speedup functions (see Figure \ref{fig:parsec}), and (iii) simplify the analysis.  Note that \cite{Hill:2008:ALM:1449375.1449387} assumes $s(k) = k^p$ where $p=0.5$ and explicitly notes that using speedup functions of another form does not significantly impact their results. 

In general, we assume that there is some policy, $P$, which allocates servers to jobs at every time, $t$.
When we talk about the system at time $t$, we will use $m^P(t)$ to denote the number of remaining jobs in the system, and $x^P_i(t)$ to denote the remaining size of job $i$.  We also denote the completion time of job $i$ under policy $P$ as $T^P_i$.  When the policy $P$ is implied, we will drop the superscript.

In general, a policy $P$ will complete jobs in a particular order.  We define
$$C = (c_1, c_2, \ldots, c_M)$$ 
to be a permutation of $1, 2, \ldots, M$ which specifies a \emph{completion order} of jobs.  If a policy follows the completion order, $C$, then for any $i < j$, job $c_i$ completes \emph{after} job $c_j$.
Specifically, job $c_1$ is the last job to complete and job $c_M$ is the first job to complete under the completion order $C$.

A policy $P$ is said to be \emph{optimal with respect to the completion order} $C$ if it achieves the lowest mean flow time of any policy which follows the completion order $C$.

We will assume that the number of servers allocated to a job need not be discrete.
In general, we will think of the $N$ servers as a \emph{single, continuously divisible resource}. 
Hence, the policy $P$ with completion order $C$ can be defined by an \emph{allocation function} $\bm{\theta}^P(t)$ where
$$\bm{\theta}^P(t) = (\theta^P_{c_1}(t), \theta^P_{c_2}(t), \ldots, \theta^P_{c_{m(t)}}(t)).$$
Here, $0 \leq \theta^P_i(t) \leq 1$ for each job $i$, and $\sum_{i=1}^{m(t)} \theta_i^P(t) \leq 1$.  An allocation of $\theta^P_{c_i}(t)$ denotes that under policy $P$, at time $t$, job $c_i$ receives a speedup of $s(\theta^P_{c_i}(t)\cdot N)$.

We will denote the allocation function of the optimal policy which minimizes mean flow time as $\bm{\theta}^*(t)$.  Similarly, we let $m^*(t)$, $x^*_i(t)$, $C^*$, and $T^*_i$ denote the corresponding quantities under the optimal policy.  

\subsection*{Why  Server Allocation is Counter-intuitive }
Consider a simple system with $N=10$ servers and $M=2$ identical jobs of size 1, where $s(k) = k^{.5}$, and where we wish to minimize mean flow time.
One intuitive argument would be that, since everything in this system is symmetric, the optimal allocation should be symmetric.
Hence, one might think to allocate half the servers to job one and half the servers to job two.
Interestingly, while this does minimize the \emph{makespan} of the jobs, it does \emph{not} minimize their flow time.
Alternately, a queueing theorist might look at the same problem and say that to minimize flow time, we should use the SRPT policy, allocating all servers to job one and then all servers to job two.
However, this causes the system to be very inefficient.
We will show that the optimal policy in this case is to allocate $75\%$ of the servers to job one and $25\%$ of the servers to job two.
In our simple, symmetric system, the optimal allocation is very asymmetric!
Note that this asymmetry is \emph{not} an artifact of the form of the speedup function used.
If we had instead assumed that $s$ was Amdahl's Law \cite{Hill:2008:ALM:1449375.1449387} with a parallelizable fraction of $f=.9$, the optimal split is to allocate $63.5\%$ of the system to one of the jobs.
If we imagine a set of $M$ arbitrarily sized jobs, one suspects that the optimal policy again favors shorter jobs, but calculating the exact allocations for this policy is not trivial.

\subsection*{Contributions}
The contributions of this paper are as follows:
\begin{itemize}[leftmargin=.35cm]
\item We derive the first closed-form expression for the optimal allocation of servers to jobs which minimizes mean flow time across jobs.  At any moment in time $t$ we define
         
$$\bm{\theta}^*(t) = (\theta^*_1(t), \theta^*_2(t), \ldots , \theta^*_m(t)),$$
        where $m=m^*(t)$ denotes the number of remaining jobs at time $t$ under the optimal allocation, and where $\theta^*_i(t)$ denotes the fraction of the $N$ servers allocated to job $i$ at time $t$.
Our optimal allocation balances the size-awareness of SRPT and the high efficiency of EQUI.  
We thus refer to our optimal policy as High Efficiency SRPT (heSRPT) (see Theorem \ref{thm:opt}).
        We also provide a closed-form expression for the mean flow time under heSRPT (see Theorem \ref{thm:optt}).
\item In the process of deriving $\bm{\theta}^*(t)$ we also prove many interesting properties of the optimal allocation function.
We prove the scale-free property in Section \ref{sec:sf}, derive the completion order of the optimal policy in Section \ref{sec:comp}, and prove the size-invariant property in Section \ref{sec:opt}.
\item While we can analyze the mean flow time under heSRPT, other policies in the literature such as HELL \cite{lin2018model}, and KNEE \cite{lin2018model} are not analytically tractable.
We therefore perform a numerical evaluation comparing heSRPT to the other proposed policies in the literature (see Section \ref{sec:eval}).
\item We have thus far focused on minimizing mean flow time, however one might also ask how to minimize the makespan of the set of jobs -- the completion time of the last job.  Minimizing makespan turns out to be easy in our setting.  We therefore present makespan minimization as a warm-up (see Section \ref{sec:makespan}).  We find that minimizing makespan favors \emph{long} jobs while maintaining high efficiency.  Thus, we name the optimal policy for minimizing makespan High Efficiency Longest Remaining Processing Time (heLRPT).
\end{itemize}

\section{A Warm-up: Minimizing Makespan}
\label{sec:makespan}
While we have thus far discussed the problem of minimizing the mean flow time of a set of $M$ jobs, this section will discuss the simpler problem of instead minimizing the \emph{makespan} of the jobs -- the time until the last job completes.
Makespan is important in applications such as MapReduce, where a scheduler tries to minimize the makespan of a set of parallelizable ``map tasks'' \cite{zhu2014minimizing} which must be all be completed before the results of the data analysis can be retrieved.
The problem of minimizing makespan, while hard in general, turns out to be fairly simple in our model.  We will prove an important property of the optimal policy in Theorem \ref{thm:equal}, and use this property in Theorem \ref{thm:helrpt} to derive the exact allocation function, $\bm{\gamma}^*(t)$, which minimizes makespan.

\subsection{Favoring Long Jobs Using heLRPT}
We begin by proving that the optimal policy with respect to makespan must complete all jobs at the same time.  This is stated formally in Theorem \ref{thm:equal}.

\begin{restatable}{thm}{equal}
    \label{thm:equal}
    Let $T^*_i$ be the completion time of job $i$ under the allocation function $\bm{\gamma}^*(t)$ which minimizes makespan.  Then,
    $$T^*_i = T^*_j \qquad \forall 0 < i,j \leq M.$$
\end{restatable}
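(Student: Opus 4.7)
The plan is to prove Theorem~\ref{thm:equal} by a standard exchange/perturbation argument: if some job finishes strictly earlier than the makespan, one can take a tiny sliver of that job's allocation, hand it to the jobs that currently set the makespan, and thereby strictly decrease the makespan, contradicting the optimality of $\bm{\gamma}^*$.

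More concretely, let $T^* = \max_{i} T^*_i$ denote the makespan under $\bm{\gamma}^*$, and suppose for contradiction that the set $J^* = \{j : T^*_j = T^*\}$ is a proper subset of $\{1,\ldots,M\}$. Pick any job $i \notin J^*$, so $T^*_i < T^*$. The idea is to construct a perturbed allocation $\bm{\gamma}^{\epsilon}(t)$ parameterized by a small $\epsilon > 0$, which agrees with $\bm{\gamma}^*(t)$ outside a short window $[T^*_i - \eta, T^*_i]$ just before job $i$ completes. Inside this window, I would decrease $\gamma^*_i(t)$ by $\epsilon$ (this is feasible for small $\epsilon$ since $\gamma^*_i(t) > 0$ on a positive-measure subset of any interval ending at $T^*_i$, otherwise $i$ would have completed earlier), and distribute that freed $\epsilon$ fraction evenly among the jobs in $J^*$, thereby keeping the feasibility constraint $\sum_j \gamma_j(t) \leq 1$ intact.

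I would then track what this perturbation does to each completion time. For every job $j \in J^*$, its cumulative service $\int_0^{T^*} s(\gamma_j(t) N)\, dt$ strictly increases (since $s$ is strictly increasing and we added allocation on a set of positive measure), so its new completion time $T^{\epsilon}_j$ is strictly less than $T^*$. For job $i$, we removed allocation over a short window, so its completion time increases from $T^*_i$ to some $T^{\epsilon}_i$; however, by the continuity of the completion time as a function of the allocation (the integral of $s(\gamma N)$ against $dt$ is continuous in $\epsilon$ and $\eta$), $T^{\epsilon}_i \to T^*_i < T^*$ as $\epsilon, \eta \to 0$, so $T^{\epsilon}_i < T^*$ for small enough perturbation. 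All other jobs' completion times are unchanged. Hence the new makespan $\max_k T^{\epsilon}_k$ is strictly less than $T^*$, contradicting the optimality of $\bm{\gamma}^*$.

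The only mildly delicate point is justifying that $\gamma^*_i(t)$ has positive integral on any interval ending at $T^*_i$ (so there is something to take away) and that the redistribution onto the jobs in $J^*$ genuinely speeds each of them up; both follow from the facts that $s$ is strictly increasing and that $i$ must be receiving positive service right before completing. Once this is handled, the contradiction is immediate and establishes $T^*_i = T^*_j$ for all $i,j$.
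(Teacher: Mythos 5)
Your argument is correct and is essentially the same perturbation/exchange argument the paper uses: remove a small $\epsilon$ of allocation from a job that finishes before the makespan and redistribute it to the jobs that finish last, showing the makespan strictly decreases. Your version is in fact a bit more careful than the paper's (localizing the perturbation to a window before $T^*_i$, redistributing only to the makespan-achieving jobs, and explicitly invoking continuity to bound the increase in $T^{\epsilon}_i$), but the underlying idea is identical.
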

\begin{proof}
    Assume for contradiction that not all jobs complete at the same time under $\bm{\gamma}^*(t)$.
    Let $i$ be the first job to complete and let $j$ be the last job to complete.
    We can now imagine an allocation function, $\bm{\gamma}'(t)$, which reallocates some fraction of the system from job $i$ to job $j$.
    We will choose some $\epsilon > 0$ and say that 
    $$
    \gamma'_i(t) = \gamma^*_i(t)-\epsilon \qquad \forall t \in [0,T^*_i] : \gamma^*_i(t) \geq \epsilon.
    $$
    $\bm{\gamma}'$ then divides the extra $\epsilon$ fraction of the system equally amongst all jobs that finished after $T^*_i$ under the optimal policy, reducing the completion time of all of these jobs while only hurting $i$ slightly.
    We can choose $\epsilon$ to be small enough that $\bm{\gamma}' \neq \bm{\gamma}^*$.
    Furthermore, since we can choose $\epsilon$ to arbitrarily small, the increase in $T'_i$ can be made arbitrarily small.
    Since $\bm{\gamma}'$ improves the makespan over $\bm{\gamma}^*$, we have a contradiction.
\end{proof}

Theorem \ref{thm:equal} tells us that, in finding the optimal policy to minimize makespan, we must only consider policies under which all job completion times are equal.
Since all job completion times are convex functions of their allocations, there exists only one policy which equalizes the completion time of the $M$ jobs.
This policy must therefore be optimal.
Theorem \ref{thm:helrpt} presents a closed-form for the optimal allocation function, $\bm{\gamma}^*(t)$.
\begin{restatable}{thm}{helrpt}
    \label{thm:helrpt}
    Let $\bm{X}=\{x_1, x_2 \ldots x_M\}$ and let 
    $$T^*_{max} = \max_{1\leq i \leq M} T^*_i$$
    be the optimal makespan.  Then
    $$T^*_{max} = ||\bm{X}||_{1/p}$$
    and the optimal policy with respect to makespan is given by the allocation function
    $$\gamma^*_i(t) = \frac{x_i^{1/p}}{\sum_{j=1}^M x_j^{1/p}} \qquad \forall t \leq T^*_{max}, 1 \leq i \leq M$$
\end{restatable}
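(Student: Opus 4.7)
The plan is to leverage Theorem \ref{thm:equal} to eliminate the choice of completion order, and then use the concavity of the speedup function to collapse the time-varying allocation problem into a static one.

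By Theorem \ref{thm:equal}, every job finishes precisely at $T^*_{max}$ under $\bm{\gamma}^*(t)$, so for each job $i$,
\[
x_i = \int_0^{T^*_{max}} (\gamma^*_i(t) N)^p \, dt.
\]
Let $\bar{\gamma}_i = \frac{1}{T^*_{max}} \int_0^{T^*_{max}} \gamma^*_i(t)\, dt$ denote the time-averaged allocation to job $i$. Since $y \mapsto y^p$ is strictly concave on $(0,\infty)$ for $p \in (0,1)$, Jensen's inequality yields
\[
x_i \le T^*_{max} \, (N \bar{\gamma}_i)^p,
\]
with equality if and only if $\gamma^*_i(t)$ is constant on $[0, T^*_{max}]$.

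Next I would turn this into a lower bound on $T^*_{max}$. Rearranging gives $\bar{\gamma}_i \ge x_i^{1/p} / (N (T^*_{max})^{1/p})$; summing over $i$ and using $\sum_i \bar{\gamma}_i \le 1$, which follows from integrating the pointwise capacity constraint $\sum_i \gamma^*_i(t) \le 1$, yields
\[
T^*_{max} \ge \frac{\left(\sum_j x_j^{1/p}\right)^p}{N^p}.
\]
To show that this bound is tight, I would plug the constant allocation $\gamma_i = x_i^{1/p}/\sum_j x_j^{1/p}$ from the theorem into the completion-time formula $T_i = x_i / (\gamma_i N)^p$ and verify that (i) all $T_i$ coincide, (ii) the allocations sum to $1$, and (iii) the resulting common completion time matches the lower bound (up to the statement's normalization of $N$).

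The main obstacle will be confirming that the $M$ separate applications of Jensen — one per job — can be \emph{simultaneously} tight without violating the shared constraint $\sum_i \gamma^*_i(t) \le 1$. This is resolved by observing that the constant allocation which makes each Jensen bound tight already saturates the capacity constraint pointwise, so no conflict arises and the lower bound is achieved. Uniqueness of the minimizer then follows from the strict concavity of $y^p$ combined with Theorem \ref{thm:equal}, pinning down $\bm{\gamma}^*(t)$ as the constant allocation in the theorem statement.
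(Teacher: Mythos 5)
Your proposal is correct, but it takes a genuinely different route from the paper. The paper's appendix proof simply \emph{posits} a time-invariant allocation that equalizes the instantaneous completion-time estimates $x_i/s(\gamma_i^*(t))$ across jobs, solves the resulting linear system under $\sum_i \gamma_i^*(t)=1$, and evaluates the makespan; the justification that this equalizing policy is actually optimal is outsourced to the remark in Section \ref{sec:makespan} that completion times are convex in the allocations and hence the equalizing policy is unique. In particular, the paper never rules out time-varying allocations that also equalize completion times. Your argument closes exactly that gap: writing $x_i=\int_0^{T^*_{max}} s(\gamma_i^*(t)N)\,dt$ (valid by Theorem \ref{thm:equal}) and applying Jensen's inequality to the concave map $y\mapsto y^p$ gives a lower bound $T^*_{max}\ge \bigl(\sum_j x_j^{1/p}\bigr)^p/N^p$ that holds for \emph{every} feasible, possibly time-varying allocation, and exhibiting the constant allocation that attains it proves both optimality and (via strict concavity forcing each $\gamma_i^*(t)$ to be constant and the capacity constraint to be saturated) uniqueness. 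Your handling of the simultaneous-tightness concern is right: the single constant allocation makes all $M$ Jensen bounds tight at once while saturating $\sum_i\gamma_i=1$ pointwise. The only cosmetic discrepancy is the factor $N^p$: the theorem statement $T^*_{max}=||\bm{X}||_{1/p}$ implicitly normalizes the speedup as $s(\gamma_i)$ rather than $s(\gamma_i N)$ (the paper's appendix computation does the same), which you correctly flag as a normalization issue rather than a substantive one. Net effect: your proof is slightly longer but self-contained and more rigorous than the paper's.
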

\begin{proof}
    The proof is straightforward, see Appendix \ref{sec:moremake}. 
\end{proof}

Theorem \ref{thm:helrpt} says that the optimal policy will need to allocate a larger fraction of the system to longer jobs at every moment in time.  
This is similar to the Longest Remaining Processing Time (LRPT) policy, except that LRPT gives strict priority to the (one) largest job in the system.
We thus refer to the optimal policy for minimizing makespan as High Efficiency LRPT (heLRPT).

\section{Minimizing Total Flow Time}
\label{sec:flow}
The purpose of this section is to determine the optimal allocation of servers to jobs at every time, $t$, in order minimize the mean flow time of a set of $M$ jobs.  This is equivalent to minimizing the \emph{total} flow time for a set of $M$ jobs, and thus we consider minimizing total flow time for the remainder of this section.  We derive a closed-form for the optimal allocation function $\bm{\theta}^*(t)$ which defines the allocation for each job at any moment in time, $t$, that minimizes total flow time.  

To derive $\bm{\theta}^*(t)$, it helps to first narrow the search space of potential allocation policies by proving some properties of the optimal policy.
Specifically, knowing the order in which jobs are completed by the optimal policy greatly constrains the form of the policy and makes the problem of finding a closed-form much more tractable.
It is tempting to assume that the optimal policy completes jobs in Shortest-Job-First (SJF) order.  This intuition is based on the case where the ``resource'' consists of only a \emph{single} server.
In the single server case, it is known that the optimal policy completes jobs in SJF order.
The proof that SJF is the optimal completion order relies on a simple interchange argument:
Consider any time $t$ when an allocation policy does not allocate the server solely to the smallest job.  
By instead allocating the whole server to the smallest job at time $t$, the total flow time across jobs is reduced.

Unfortunately, in the case with many servers and parallelizable jobs, it is inefficient and typically undesirable to give the \emph{entire} system to a single job.
Hence, the usual simple interchange argument fails to generalize to the setting considered in this paper.
In general, it is not obvious what fraction of the system resources should be given to each job in the system, and it is unclear how this fraction should change as jobs depart the system over time.

Instead of an interchange argument, we present an alternative proof that the optimal completion order in a many server system with parallelizable jobs is SJF.
Our proof will require a sequence of theorems.
First, given any completion order, $C$, we will consider the policy $P$ which is optimal with respect to $C$.
We will prove several properties of this policy $P$, including finding an expression for the total flow time under $P$.
This will allow us to then optimize over the space of potential completion orders and conclude that the optimal completion order is SJF.

Once we know that the optimal completion order is SJF, we can derive an exact form of the optimal allocation function, $\bm{\theta}^*(t)$.  The theorems required in our analysis are outlined in Section \ref{sec:overview}.

\subsection{Overview of Our Results}
\label{sec:overview}
We begin by showing that the optimal allocation does not change between job departures, and hence it will suffice to consider the value of the allocation function only at times just after a job departure occurs.  This is stated formally as Theorem \ref{thm:constant}.
\begin{restatable}{thm}{constant}
    \label{thm:constant}
    Consider any two times $t_1$ and $t_2$ where, WLOG, $t_1<t_2$.  Let $m^*(t)$ denote the number of jobs in the system at time $t$ under the optimal policy.  If $m^*(t_1)=m^*(t_2)$ then
    $$\bm{\theta}^*(t_1) = \bm{\theta}^*(t_2).$$
\end{restatable}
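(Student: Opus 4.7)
My plan is to use a Jensen-style swapping argument exploiting the strict concavity of $s(k)=k^p$ for $p<1$. First I would note that since there are no arrivals in this model, $m^*(t)$ is non-increasing; hence $m^*(t_1)=m^*(t_2)$ forces $m^*(t)$ to be constant on $[t_1,t_2]$, meaning no job departs during this interval and the set of live jobs together with their identities is fixed. It therefore suffices to prove the stronger statement that $\bm{\theta}^*$ is constant on any interval free of departures, and then specialize to $t_1$ and $t_2$.

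Toward a contradiction, I would assume $\bm{\theta}^*$ is not (almost everywhere) constant on $[t_1,t_2]$, and define the time-averaged allocation
$$\bar{\bm{\theta}} \;=\; \frac{1}{t_2-t_1}\int_{t_1}^{t_2}\bm{\theta}^*(t)\,dt,$$
which still satisfies $\sum_i \bar{\theta}_i \leq 1$ and $\bar{\theta}_i\geq 0$. The key inequality is Jensen applied to the strictly concave rate function $s$: for every live job $i$,
$$(t_2-t_1)\,s(\bar{\theta}_i N) \;\geq\; \int_{t_1}^{t_2} s(\theta_i^*(t)N)\,dt,$$
with strict inequality for at least one $i$ whenever $\theta_i^*(\cdot)$ is not constant on $[t_1,t_2]$ (this is where $0<p<1$ is used).

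Next I would construct a competitor policy $P'$ that coincides with $\bm{\theta}^*$ on $[0,t_1)$, uses the constant allocation $\bar{\bm{\theta}}$ on $[t_1,t_2]$, and on $(t_2,\infty)$ replays $\bm{\theta}^*$ while immediately redistributing to the still-present jobs any allocation that the original policy had assigned to a job that has already completed under $P'$ (since $s$ is increasing, such redistribution can only further accelerate remaining jobs). By the Jensen inequality above, at time $t_2$ every job's remaining size under $P'$ is at most its remaining size under $\bm{\theta}^*$, and strictly smaller for at least one job. A straightforward induction over the departure epochs after $t_2$ then shows $T_i^{P'}\leq T_i^*$ for every $i$, with strict inequality for at least one $i$. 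This yields a strictly smaller total flow time than $\bm{\theta}^*$, contradicting optimality and forcing $\bm{\theta}^*$ to be constant on $[t_1,t_2]$.

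The main obstacle, and the step I would write most carefully, is the post-$t_2$ comparison: I need a clean monotonicity statement of the form ``starting a schedule with componentwise smaller remaining sizes and the same (or redistributed) allocation rule yields componentwise earlier completion times.'' I would prove this by induction on the ordered sequence of departures after $t_2$, using that (i) each job's cumulative service is monotone in its allocation and (ii) reallocating the freed share of an early-finishing job to surviving jobs only increases their instantaneous service rates by concavity and monotonicity of $s$. Once that monotonicity lemma is in hand, the contradiction with optimality is immediate and Theorem~\ref{thm:constant} follows.
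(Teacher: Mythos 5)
Your proposal is correct and follows essentially the same route as the paper's own proof: replace $\bm{\theta}^*$ on the departure-free interval $[t_1,t_2]$ by its time average, invoke concavity of $s$ (Jensen) to show every job's residual size at $t_2$ is no larger, and contradict optimality. You are in fact somewhat more careful than the paper about the post-$t_2$ comparison (the explicit monotonicity lemma over departure epochs and the strictness of the Jensen inequality), where the paper simply asserts that an equal-or-better policy exists; this extra rigor is welcome but does not change the argument.
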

\noindent Theorem \ref{thm:constant} is proven in Section \ref{sec:prelim}.

Another key property of the optimal allocation, which we refer to as the \emph{scale-free property}, states that for any job, $i$, job $i$'s allocation relative to jobs completed after job $i$ remains constant throughout job $i$'s lifetime.  It turns out that our scale-free property holds for an even more general class of policies.  This generalization is stated formally in Theorem \ref{thm:sf}. 
\begin{restatable}[Scale-free Property]{thm}{sf}
    \label{thm:sf}
    Consider any completion order, $C=(c_1, c_2, \ldots, c_M)$.  Let $\bm{\theta}^P(t)$ denote the allocation function of a policy $P$ which is optimal with respect to $C$.
    Let $t$ be a time when there are exactly $i$ jobs in the system and hence $m(t)=i$.  Consider any $t'$ such that $t'<t$.  Then,
    $$\frac{\theta_{c_i}^P(t')}{\sum_{j=1}^{i} \theta_{c_j}^P(t')} = \theta_{c_i}^P(t).$$
\end{restatable}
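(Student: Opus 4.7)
The plan is to reduce the search for $P$ to a finite-dimensional optimization in which the scale-free property falls out of the first-order (KKT) stationarity conditions, exploiting the multiplicative structure of $s(k)=k^p$.

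First, I would extend Theorem \ref{thm:constant} to policies optimal with respect to a fixed completion order $C$. Its proof is a convexity/averaging argument on an interval where the set of present jobs is held fixed, and that convexity survives once $C$ is prescribed. Consequently $P$ may be encoded by a finite list of constant allocation vectors $\bm{\phi}^{(l)}=(\phi^{(l)}_1,\ldots,\phi^{(l)}_l)$ --- one per interval during which exactly $l$ jobs ($c_1,\ldots,c_l$) are present --- together with the durations $\Delta_l$ of those intervals. Here $\phi^{(l)}_k$ is the fraction of servers allocated to $c_k$ while $l$ jobs remain.

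Next, I would encode optimality as a finite-dimensional constrained program. Since $c_k$ departs at time $\sum_{l\geq k}\Delta_l$, total flow time reduces to $\sum_{l=1}^M l\,\Delta_l$. The work-conservation constraint for $c_k$ reads
\[
\sum_{l=k}^{M} \Delta_l \cdot (N\phi^{(l)}_k)^p \;=\; x_{c_k},
\]
supplemented by $\sum_{j=1}^{l}\phi^{(l)}_j=1$ and nonnegativity. Because $s'(0)=\infty$ under $s(k)=k^p$, each $\phi^{(l)}_k$ is strictly positive at the optimum, so I can apply KKT with equality multipliers $\lambda_k$ (for work) and $\mu_l$ (for the simplex constraints). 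Differentiating the Lagrangian in $\phi^{(l)}_k$ yields the stationarity condition
\[
\phi^{(l)}_k \;=\; \bigl(p\,\lambda_k\, N^p\, \Delta_l / \mu_l\bigr)^{1/(1-p)} \;=\; B_k \cdot A_l,
\]
a separable product in which $B_k:=(p\lambda_k)^{1/(1-p)}$ depends only on $k$ and $A_l:=(N^p\Delta_l/\mu_l)^{1/(1-p)}$ depends only on $l$. Enforcing $\sum_{j=1}^l \phi^{(l)}_j = 1$ then pins down $A_l = 1/\sum_{j=1}^l B_j$, so $\phi^{(l)}_k = B_k / \sum_{j=1}^l B_j$.

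Finally, this separability is exactly the scale-free property. For any $t'<t$ with $t$ in the $i$-interval and $t'$ in the $l$-interval (so necessarily $l\geq i$),
\[
\frac{\theta^P_{c_i}(t')}{\sum_{j=1}^{i}\theta^P_{c_j}(t')} \;=\; \frac{\phi^{(l)}_i}{\sum_{j=1}^{i}\phi^{(l)}_j} \;=\; \frac{B_i}{\sum_{j=1}^{i}B_j} \;=\; \phi^{(i)}_i \;=\; \theta^P_{c_i}(t),
\]
which is the claim. I expect the main obstacle to be the first step: rigorously extending Theorem \ref{thm:constant} and in particular ruling out optimal policies with fluctuating allocations inside a fixed-job-set interval. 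Once piecewise constancy is in hand, the Lagrangian computation is direct, and the $k$-versus-$l$ separability of $\phi^{(l)}_k$ is precisely the algebraic fingerprint of $s(k)=k^p$.
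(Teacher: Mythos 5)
Your proof is correct in outline but takes a genuinely different route from the paper. The paper proves the scale-free property by induction on $M$: it peels off the first-completing job $c_M$, treats its allocation as an unavailable fraction of the system, and invokes Lemma \ref{lem:scaling} to identify the residual $k$-job problem with a ``slow'' system to which the inductive hypothesis applies; the property is then transported back through a ``mixed-speed'' system and a renormalization by $1-\theta^P_{c_M}$. You instead reduce directly to a finite-dimensional program (via the extension of Theorem \ref{thm:constant} to fixed-$C$-optimal policies, which the paper itself asserts in the remark following that theorem) and read the scale-free property off the KKT stationarity condition: the separability $\phi^{(l)}_k = B_k A_l$ \emph{is} the scale-free property, and it is exactly the algebraic fingerprint of $s(k)=k^p$ that you identify. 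Your route is more direct and arguably more informative---it produces the scale-free constants explicitly as $\omega_i = \sum_{j<i}B_j/B_i$ and anticipates the first-order conditions the paper only derives later (in Lemma \ref{lem:transform} and Theorem \ref{thm:sjf}, which themselves require the scale-free property as input), so it effectively collapses two stages of the paper's development into one Lagrangian computation. What the paper's induction buys in exchange is that it never needs to certify that the KKT conditions are \emph{necessary} at the optimum: you should still (i) note a constraint qualification for the equality system, (ii) make the $s'(0)=\infty$ perturbation argument for interiority careful (shifting mass between jobs also moves the departure times $\sum_{l\ge k}\Delta_l$, so the gain must be argued net of that), and (iii) dispose of degenerate intervals $\Delta_l=0$, where the stationarity condition in $\phi^{(l)}_k$ gives no information (harmlessly, since the statement is vacuous for times in a zero-length interval). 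None of these is a structural obstacle, and your level of rigor on the piecewise-constancy step matches the paper's own.
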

\noindent The scale-free property is important because it allows us to derive an expression for the total flow time under any policy $P$ where $P$ is optimal with respect to a given completion order $C$ (see Lemma \ref{lem:transform}).
Theorem \ref{thm:sf} is proven in Section \ref{sec:sf}.

Our next step is to minimize the expression from Lemma \ref{lem:transform} over the space of all completion orders to prove that the optimal completion order, $C^*$, is the SJF order.
This is stated in Theorem \ref{thm:sjf}.

\begin{restatable}[Optimal Completion Order]{thm}{sjf}
    \label{thm:sjf}
    The optimal policy follows the completion order, $C^*$, where 
    $$C^* = (1,2,3, \ldots, M)$$
    and hence jobs are completed in the Shortest-Job-First (SJF) order.
\end{restatable}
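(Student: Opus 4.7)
\textbf{Proof plan for Theorem \ref{thm:sjf}.}
The approach is to reduce the optimization over policies to an optimization over the discrete set of permutations. By Theorems \ref{thm:constant} and \ref{thm:sf}, together with Lemma \ref{lem:transform}, the total flow time achieved by the policy that is optimal with respect to a given completion order $C = (c_1,\ldots,c_M)$ is expressible as a closed-form function of the sizes $x_{c_1},\ldots,x_{c_M}$ and the exponent $p$. So the problem becomes: over all permutations $C$ of $\{1,\ldots,M\}$, minimize this closed-form expression. The goal is to show the minimizer is $C^* = (1,2,\ldots,M)$, i.e., the ordering in which the largest job $x_1$ completes last and the smallest job $x_M$ completes first.

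The first step is to rewrite the expression from Lemma \ref{lem:transform} in a separable form
\[
\mathrm{TotalFlow}(C) \;=\; \sum_{i=1}^{M} w_i \cdot f\!\bigl(x_{c_i}\bigr),
\]
where $w_i>0$ depends only on the position $i$ in the completion order (and on $p,N,M$), and $f$ is a fixed monotone-increasing function of the job size (I expect $f(x)=x$ up to a global constant, since the allocation rates under the optimal-for-$C$ policy are, by the scale-free property, position-dependent constants multiplied by $N$). The key structural input is Theorem \ref{thm:sf}: it guarantees that during the phase while $i$ jobs remain, the allocations are identical to what they would be if only those $i$ jobs existed, and hence the weights $w_i$ depend on the position $i$ but not on which particular job occupies that position.

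The second step is to establish that the weights $w_i$ are monotone in $i$, with larger $w_i$ corresponding to positions that are completed later. Intuitively, a job that is scheduled to finish last accumulates flow time across all departure phases and bears the ``cost of inefficiency'' for the longest time, whereas the first job to finish contributes only during one phase with many co-running jobs. Once monotonicity $w_1 \geq w_2 \geq \cdots \geq w_M$ is in hand, the rearrangement inequality immediately yields that $\sum_i w_i \cdot x_{c_i}$ is minimized when the largest $x$'s are paired with the largest $w$'s, i.e., when $c_i = i$. Equivalently, I would prove this via an adjacent-swap exchange argument: given any $C$ containing an inversion $x_{c_i} < x_{c_{i+1}}$, swapping $c_i$ and $c_{i+1}$ changes the objective by $(w_i - w_{i+1})(x_{c_i} - x_{c_{i+1}}) \leq 0$, strictly negative when the sizes differ, contradicting optimality of $C$.

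The main obstacle I expect is verifying the monotonicity of the $w_i$ from the precise form handed to us by Lemma \ref{lem:transform}. If the expression is not obviously separable, I would bypass the explicit form of $w_i$ altogether and perform the adjacent-swap computation directly on the optimal-for-$C$ policy: swap the sizes assigned to positions $i$ and $i{+}1$, reuse the same allocation function (which by the scale-free property depends only on positions, not sizes, up to the phase durations), and show that the resulting change in total flow time is proportional to $x_{c_i}-x_{c_{i+1}}$ with a non-positive coefficient. The heart of the calculation is the sign of this coefficient, which I expect to follow from the fact that, among positions $i$ and $i{+}1$, position $i$ (the later-completing one) always accumulates at least as much weighted service time as position $i{+}1$, a consequence of Theorem \ref{thm:constant} and the inequality $0<p<1$.
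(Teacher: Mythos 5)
Your high-level plan --- use Lemma \ref{lem:transform} to write the total flow time as $\sum_k (\text{position-dependent weight})\cdot x_{c_k}$ and then argue by rearrangement --- is indeed the paper's strategy, but there is a genuine gap in the step where you assert that the weights ``depend on the position $i$ but not on which particular job occupies that position.'' The scale-free property does not give you this. For an arbitrary completion order $C$, the policy that is optimal with respect to $C$ solves a \emph{constrained} optimization (its allocations must actually realize the order $C$), and when that constraint binds --- e.g.\ when $C$ demands that a very large job finish before a very small one --- the resulting $\omega^P_k$, and hence your weights, depend on the job sizes and on $C$. The paper closes this gap with a relaxation argument: it drops the order constraint, solves the first-order conditions $k s'(1+\omega_k) - (k-1)s'(\omega_k)=0$ to get $\omega_k = 1/\bigl(\bigl(\tfrac{k}{k-1}\bigr)^{1/(1-p)}-1\bigr)$, which is independent of the sizes and of $C$; this yields a \emph{lower bound} $\tfrac{1}{s(N)}\sum_k x_{c_k}\Delta(k)$ on every order's optimal flow time, where $\Delta(k)=k s(1+\omega_k)-(k-1)s(\omega_k)$ is increasing in $k$. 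It then shows (i) SJF minimizes this lower bound over permutations, and (ii) for SJF the relaxed solution is \emph{feasible} (the induced allocations increase in $k$, so smaller jobs genuinely finish first), so SJF attains its lower bound while every other order sits at or above its own, larger, one. Step (ii) is essential and absent from your plan; without it you have only compared lower bounds, not achieved values.

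A secondary but real error: your monotonicity claim and your use of the rearrangement inequality are both reversed. In the paper's indexing ($c_1$ completes last, $c_M$ first), $\Delta(k)$ is \emph{increasing} in $k$, i.e.\ the \emph{first}-completed position carries the largest coefficient (its processing delays all $M$ jobs), not the last-completed one as your intuition suggests; and the rearrangement inequality minimizes $\sum_k w_k x_{c_k}$ by pairing the largest weight with the \emph{smallest} size, not the largest. These two sign errors cancel to produce the correct answer $C^*=(1,\ldots,M)$, but as written the argument is internally inconsistent. Your fallback adjacent-swap on the policy itself rests on the same unproved foundation: ``the allocation function depends only on positions, not sizes'' is the size-invariance property (Theorem \ref{thm:indep}), which the paper proves only \emph{after} establishing SJF, and only for the globally optimal policy.
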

\noindent Since jobs are completed in SJF order, we can conclude that, at time $t$, the jobs left in the system are specifically jobs $1, 2, \ldots, m(t)$.
Theorem \ref{thm:sjf} is proven in Section \ref{sec:comp}.

We next derive the optimal allocation policy with respect to the optimal completion order $C^*$.
The first step to deriving the optimal allocation policy is to prove a size-invariant property which says that the optimal allocation to jobs $1, 2, \ldots, m(t)$ at time $t$ depends only on $m(t)$, not on the specific remaining sizes of the these jobs.
This is a counter-intuitive result because one might imagine that the optimal allocation should be different if two very differently sized jobs are in the system instead of two equally sized jobs.
The size-invariant property is stated in Theorem \ref{thm:indep}.
\begin{restatable}[Size-invariant Property]{thm}{indep}
    \label{thm:indep}
    Consider any time $t$.  Imagine two sets of jobs, $A$ and $B$, each of size $m$.  
    If $\bm{\theta}^*_A(t)$ and $\bm{\theta}^*_B(t)$ are the optimal allocations to the jobs in sets $A$ and $B$, respectively, we have that
    $$\bm{\theta}^*_A(t) = \bm{\theta}^*_B(t).$$
\end{restatable}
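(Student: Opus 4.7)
The plan is to prove size invariance by deriving, via a Bellman-style decomposition, that the optimal remaining total flow time $V_m$ at any state with $m$ jobs remaining is \emph{linear} in the remaining job sizes with coefficients depending only on $m$. This linearity will force the inner optimization that determines the current allocation to have a size-independent minimizer, which is exactly the statement of the theorem.

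\textbf{Setup.} By Theorem \ref{thm:constant} the allocation during the phase with $m$ jobs remaining is a constant vector $\bm{\theta}^{(m)}$, and by Theorem \ref{thm:sjf} the remaining jobs are $1,\ldots,m$ with job $m$ the next to depart. Writing the current sizes as $x_1\geq\cdots\geq x_m$, the phase length is $\Delta = x_m/(\theta_m N)^p$ and the sizes entering the $(m{-}1)$-job phase become $x_i - x_m(\theta_i/\theta_m)^p$ for $i<m$. Since each of the $m$ currently-present jobs accrues $\Delta$ additional waiting time toward its own completion, the dynamic-programming recursion is
\[
V_m(x_1,\ldots,x_m) = \min_{\bm{\theta}} \Bigl[\, m\Delta + V_{m-1}\bigl(x_1 - x_m(\theta_1/\theta_m)^p,\, \ldots,\, x_{m-1} - x_m(\theta_{m-1}/\theta_m)^p\bigr) \Bigr],
\]
where $\bm{\theta}$ ranges over $\sum_i \theta_i \leq 1$, $\theta_i > 0$.

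\textbf{Induction.} I would then show by strong induction on $m$ that $V_m(x_1,\ldots,x_m) = \sum_{i=1}^m c_i^{(m)}\, x_i/N^p$ for constants $c_i^{(m)}$ independent of the sizes, and that the inner minimizer $\bm{\theta}^{(m)}$ is likewise size-independent. The base $m=1$ is immediate with $V_1(x_1)=x_1/N^p$ and $\bm{\theta}^{(1)}=(1)$. For the step, substituting the inductive form of $V_{m-1}$ reorganizes the inner expression as
\[
\sum_{i=1}^{m-1} c_i^{(m-1)}\, x_i/N^p \;+\; \frac{x_m}{N^p}\cdot\frac{m - \sum_{i=1}^{m-1} c_i^{(m-1)}\, \theta_i^p}{\theta_m^p},
\]
in which the $\bm{\theta}$-dependent factor is multiplied by $x_m/N^p$ alone, with no other size cross-terms. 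Hence the minimization depends only on the already size-free coefficients $c_i^{(m-1)}$, delivering a size-independent $\bm{\theta}^{(m)}$ and closing the induction with $c_i^{(m)}=c_i^{(m-1)}$ for $i<m$ and $c_m^{(m)}$ equal to the optimized value of the second fraction.

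\textbf{Main obstacle.} The delicate point is justifying the Bellman recursion on the right feasible region: for job $m$ to actually be the next to finish we need the SJF constraint $\theta_i/\theta_m \leq (x_i/x_m)^{1/p}$ for each $i<m$, which is size-dependent. I would resolve this by appealing to Theorem \ref{thm:sjf}: since SJF is globally optimal, the size-independent minimizer produced by the otherwise-unconstrained inner problem must already satisfy the SJF constraint on every admissible input, for otherwise Theorem \ref{thm:sjf} would fail on some instance. Equivalently, one checks directly that $\bm{\theta}^{(m)}$ gives (weakly) larger shares to smaller jobs (the heSRPT pattern), which together with $x_i\geq x_m$ renders the SJF inequality slack. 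A minor verification is that the infimum over $\bm{\theta}$ is attained in the interior $\theta_m>0$; this is immediate since the objective diverges as $\theta_m\to 0$.
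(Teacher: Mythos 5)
Your proof is correct, but it takes a genuinely different route from the paper's. The paper proves Theorem \ref{thm:indep} in two sentences as a corollary of the machinery already built for Theorem \ref{thm:sjf}: the optimal allocation is characterized by the first-order conditions $k\,s'(1+\omega_k) - (k-1)\,s'(\omega_k) = 0$ on the scale-free constants together with the SJF completion order, and since neither of these mentions the job sizes, the resulting allocation cannot depend on them. You instead give a self-contained dynamic-programming argument: using Theorems \ref{thm:constant} and \ref{thm:sjf} to set up a phase-by-phase Bellman recursion, you show by induction that the optimal remaining flow time is linear in the residual sizes with size-free coefficients, so the inner minimization --- whose only size-dependence is an overall factor of $x_m/N^p$ --- has a size-free minimizer. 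Your approach costs more work but buys more: it re-derives the structure of Theorem \ref{thm:optt} (linearity of $T^*$ in the $x_i$, with the coefficient of $x_k$ fixed once $k$ jobs remain) without invoking the scale-free property, whereas the paper's proof is essentially free given Theorem \ref{thm:sjf} but inherits all of that theorem's scaffolding. One caution on your handling of the completion-order constraint: appealing to Theorem \ref{thm:sjf} alone is not quite enough, since SJF optimality would still permit the constraint $\theta_i/\theta_m \leq (x_i/x_m)^{1/p}$ to bind (jobs finishing simultaneously), which would reintroduce size dependence through the constraint boundary; what actually closes the argument is your second observation, that the unconstrained minimizer satisfies $\theta_i \leq \theta_m$ for $i<m$ (this is the monotonicity verified in Appendix \ref{sec:addl}), so the constraint is slack whenever $x_i \geq x_m$.
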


Theorem \ref{thm:indep} simplifies the computation of the optimal allocation, since it allows us to ignore the actual remaining sizes of the jobs in the system.  
We need only to derive one optimal allocation for each possible value of $m(t)$.  We prove Theorem \ref{thm:indep} in Section \ref{sec:opt}.

The consequence of the above results is that we can explicitly compute the optimal allocation function.
We are thus finally ready to state Theorem \ref{thm:opt}, which provides the allocation function for the optimal allocation policy which minimizes total flow time.
\begin{restatable}[Optimal Allocation Function]{thm}{opt}
    \label{thm:opt}
    At time $t$, when $m(t)$ jobs remain in the system,
    $$\theta^*_i(t) = \left(\frac{i}{m(t)}\right)^{\frac{1}{1-p}} - \left(\frac{i-1}{m(t)}\right)^{\frac{1}{1-p}} \qquad \forall 1\leq i \leq m(t).$$
\end{restatable}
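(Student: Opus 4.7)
The plan is to combine the reductions established in Theorems \ref{thm:constant}, \ref{thm:sjf}, \ref{thm:sf}, and \ref{thm:indep} into a finite-dimensional optimization whose decision variables decouple. By Theorems \ref{thm:constant}, \ref{thm:sjf}, and \ref{thm:indep}, I may restrict attention to allocations that are constant during each \emph{phase} --- where phase $k$ denotes the period while exactly $k$ jobs remain --- depend only on $k$ rather than on the residual sizes, and assign the surviving set $\{1,\ldots,k\}$. Writing $\alpha_k = \theta_k^{(k)}$ for the share given to the smallest surviving job in phase $k$, the scale-free property (Theorem \ref{thm:sf}) gives $\theta_i^{(k)} = (1-\alpha_k)\theta_i^{(k-1)}$ for $i<k$, which unwinds to
$$\theta_i^{(k)} \;=\; \alpha_i \prod_{j=i+1}^k (1-\alpha_j).$$
Hence the entire allocation schedule is parameterized by the scalars $\alpha_1,\ldots,\alpha_M$, with $\alpha_1 = 1$ forced by the trivial one-job case.

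Next I would express the total flow time $F = \sum_{k=1}^M k\tau_k$ in these scalars. Since job $k$ must have zero residual size at the end of phase $k$, writing $u_k = x_k/N^p$ and using the product form above to account for how much of job $k$ is processed during each earlier phase yields the telescoping recursion
$$\tau_k \;=\; \frac{u_k}{\alpha_k^p} - (1-\alpha_{k+1})^p\,\frac{u_{k+1}}{\alpha_{k+1}^p}, \qquad \tau_M \;=\; \frac{u_M}{\alpha_M^p}.$$
Substituting into $F$ and carefully reindexing produces the crucial decoupled form
$$F \;=\; \sum_{k=1}^M \bigl[\,k - (k-1)(1-\alpha_k)^p\,\bigr]\frac{u_k}{\alpha_k^p},$$
in which each $\alpha_k$ appears in only one summand. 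This decoupling --- in particular tracking exactly which cross-terms cancel when passing from $\sum_k k\tau_k$ to the final expression --- is the step I expect to require the most care; everything before it is a mechanical unwinding of the prior theorems, and everything after it is single-variable calculus.

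With decoupling in hand, the problem reduces to $M$ independent scalar minimizations of $g_k(\alpha) = [k - (k-1)(1-\alpha)^p]/\alpha^p$ on $(0,1]$. A short differentiation gives
$$g_k'(\alpha) \;=\; \frac{p}{\alpha^{p+1}}\bigl[(k-1)(1-\alpha)^{p-1} - k\bigr],$$
whose unique root is $\alpha_k^\star = 1 - ((k-1)/k)^{1/(1-p)}$; because $p-1<0$ makes $(1-\alpha)^{p-1}$ strictly increasing in $\alpha$, $g_k'$ changes sign from negative to positive at $\alpha_k^\star$, so it is the global minimizer on $(0,1]$. Substituting $\alpha_j^\star = 1 - ((j-1)/j)^{1/(1-p)}$ back into $\theta_i^{(k)} = \alpha_i \prod_{j=i+1}^k (1-\alpha_j)$ and applying the telescoping identity
$$\prod_{j=i+1}^k \Bigl(\frac{j-1}{j}\Bigr)^{1/(1-p)} \;=\; \Bigl(\frac{i}{k}\Bigr)^{1/(1-p)}$$
collapses the product to exactly the closed form asserted in Theorem \ref{thm:opt}.
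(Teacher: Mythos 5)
Your proof is correct and follows essentially the same route as the paper: your $\alpha_k$ is exactly $1/(1+\omega^P_k)$ in the paper's notation, your decoupled objective $\sum_k [k-(k-1)(1-\alpha_k)^p]u_k/\alpha_k^p$ is precisely the flow-time expression of Lemma \ref{lem:transform}, and your per-index minimization plus telescoping back-substitution reproduce the first-order conditions from Theorem \ref{thm:sjf} and the linear-system solution in the paper's proof of Theorem \ref{thm:opt}. The only point worth adding is a one-line feasibility check that the unconstrained minimizers give $\tau_k \geq 0$ --- equivalently, that the resulting $\theta^*_i(t)$ are increasing in $i$ so job $k$ really is the next to finish in phase $k$ --- which the paper handles explicitly in its ``relaxed minimum'' argument and accompanying appendix lemmas.
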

Theorem \ref{thm:opt} is proven in Section \ref{sec:opt}.
Given the optimal allocation function $\bm{\theta}^*(t)$, we can also explicitly compute the optimal total flow time for any set of $M$ jobs.
This is stated in Theorem \ref{thm:optt}.
\begin{restatable}[Optimal Total Flow Time]{thm}{optt}
    \label{thm:optt}
    Given a set of $M$ jobs of size $x_1 > x_2 > \ldots > x_M$, the total flow time, $T^*$, under the optimal allocation policy $\bm{\theta}^*(t)$ is given by
    $$T^*=\frac{1}{s(N)}\sum_{k=1}^M  x_{k}\cdot\bigl[k  s(1 + \omega^*_{k})  - (k-1) s(\omega^*_{k})\bigr]$$
    where
    $$\omega_{k}^* = \frac{1}{\left(\frac{k}{k-1}\right)^{\frac{1}{1-p}} - 1} \qquad \forall 1 < k \leq M$$
    and
    $$\omega_1^* = 0$$
\end{restatable}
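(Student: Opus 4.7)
The plan is to leverage Theorems~\ref{thm:sjf}, \ref{thm:constant}, and~\ref{thm:opt} to reduce the computation of $T^*$ to a triangular linear system in the phase durations, then solve the system and rearrange the answer into the stated form. First, Theorem~\ref{thm:sjf} tells us jobs complete in SJF order, so job $k$ is the $(M-k+1)$-th to depart. By Theorem~\ref{thm:constant}, the allocation is piecewise constant between departures, so I can define phase $j$ to be the interval during which exactly $j$ jobs remain (phases occur in the order $M, M-1, \ldots, 1$), and let $d_j$ denote its duration. Theorem~\ref{thm:opt} specifies that during phase $j$, job $i$ (for $1 \leq i \leq j$) receives allocation $\theta_i^{(j)} = (i/j)^{1/(1-p)} - ((i-1)/j)^{1/(1-p)}$. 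Since job $k$ departs at the end of phase $k$, its flow time is $T_k = \sum_{j=k}^M d_j$, and the total flow time becomes $T^* = \sum_{k=1}^M T_k = \sum_{j=1}^M j\, d_j$.

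Next, I would set up the conservation equations expressing that job $k$'s size is fully served by the end of phase $k$:
$$ x_k = s(N) \sum_{j=k}^M (\theta_k^{(j)})^p\, d_j, \qquad k = 1, \ldots, M. $$
The crucial observation is that $\theta_k^{(j)}$ factorizes: setting $b_k := k^{1/(1-p)} - (k-1)^{1/(1-p)}$, we get $\theta_k^{(j)} = b_k \cdot j^{-1/(1-p)}$, hence $(\theta_k^{(j)})^p = b_k^p \cdot j^{-p/(1-p)}$. Introducing $e_j := d_j \cdot j^{-p/(1-p)}$, the system collapses to $x_k = s(N)\, b_k^p\, \sum_{j=k}^M e_j$, and differencing adjacent equations yields the closed form $e_k = [x_k/b_k^p - x_{k+1}/b_{k+1}^p]/s(N)$ for $k<M$, and $e_M = x_M/(s(N)\, b_M^p)$.

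With the $e_j$ in hand, I plug back into $T^*$. Observing that $j\, d_j = j^{1/(1-p)} e_j$ gives $T^* = \sum_{j=1}^M j^{1/(1-p)} e_j$; substituting the solution for $e_j$ and re-collecting by $x_k$, the coefficient of $x_k$ telescopes to $(k^{1/(1-p)} - (k-1)^{1/(1-p)})/(s(N)\, b_k^p) = b_k^{1-p}/s(N)$ for every $k$ (the boundary cases $k=1$ and $k=M$ are verified directly, noting $b_1 = 1$).

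The final step is to identify $b_k^{1-p}$ with the expression $k\, s(1+\omega_k^*) - (k-1)\, s(\omega_k^*)$ in the theorem statement. Setting $r_k := (k/(k-1))^{1/(1-p)}$, one checks $\omega_k^* = 1/(r_k - 1)$, $1 + \omega_k^* = r_k/(r_k - 1)$, as well as the two identities $r_k - 1 = b_k/(k-1)^{1/(1-p)}$ and $k r_k^p - (k-1) = b_k/(k-1)^{p/(1-p)}$, whence $k\, s(1+\omega_k^*) - (k-1)\, s(\omega_k^*) = [k r_k^p - (k-1)]/(r_k - 1)^p = b_k / b_k^p = b_k^{1-p}$; the $k=1$ case follows from $\omega_1^* = 0$ and $b_1 = 1$. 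The main obstacle is conceptual rather than technical: recognizing the separation-of-variables factorization $\theta_k^{(j)} = b_k\cdot j^{-1/(1-p)}$ that decouples the triangular system and produces the clean telescoping. The subsequent algebraic identification with the $\omega_k^*$ form is routine but exponent-heavy bookkeeping.
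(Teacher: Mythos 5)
Your proposal is correct, and it takes a genuinely different route from the paper. The paper does not give a standalone proof of Theorem~\ref{thm:optt} at all: the stated formula is literally the expression $T^P(\bm{\omega}^P)$ already derived in Lemma~\ref{lem:transform} (obtained by comparing $P$ to the constant-allocation policy $P'$ via the scale-free property and Lemma~\ref{lem:scaling}), evaluated at the SJF completion order of Theorem~\ref{thm:sjf} with the $\omega^*_k$ coming from the first-order conditions in that proof. You instead work forward from the explicit allocation function of Theorem~\ref{thm:opt}: you exploit the separable form $\theta^{(j)}_i = b_i\, j^{-1/(1-p)}$ with $b_i = i^{1/(1-p)} - (i-1)^{1/(1-p)}$ to decouple the triangular system of conservation equations for the phase durations, telescope to get the coefficient $b_k^{1-p}/s(N)$ of $x_k$, and then verify algebraically that $b_k^{1-p} = k\,s(1+\omega^*_k) - (k-1)\,s(\omega^*_k)$; I checked the exponent bookkeeping (in particular the repeated use of $1 + \tfrac{p}{1-p} = \tfrac{1}{1-p}$) and it is sound. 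What the paper's route buys is that the flow-time expression is available \emph{before} the optimal allocation is known, which is essential since Lemma~\ref{lem:transform} is the very tool used to determine $C^*$ and $\bm{\omega}^*$ in the first place; what your route buys is an independent end-to-end consistency check that the closed-form allocations of Theorem~\ref{thm:opt} really do produce the claimed total flow time, without invoking Lemma~\ref{lem:transform} or the policy $P'$.
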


Note that the optimal allocation policy biases towards short jobs, but does not give strict priority to these jobs in order to maintain the overall efficiency of the system.  That is,
$$0<\bm{\theta}^*_1(t) <\bm{\theta}^*_2(t) < \ldots < \bm{\theta}^*_{m(t)}(t).$$
We thus refer to the optimal policy derived in Theorem \ref{thm:opt} as \emph{High Efficiency Shortest-Remaining-Processing-Time} or \emph{heSRPT}.

\subsection{A Property of the Optimal Policy}\label{sec:prelim}
In determining the optimal completion order of jobs, we first show that the optimal allocation function remains constant between job departures.
This allows us to think of the optimal allocation function as being composed of $M$ decision points where new allocations must be determined.
\constant*
\begin{proof}
    Consider any time interval $[t_1, t_2]$ during which no job departs the system, and hence $m^*(t_1) = m^*(t_2)$.  
    Assume for contradiction that the optimal policy is unique, and that $\bm{\theta}^*(t_1) \neq \bm{\theta}^*(t_2)$.  
    We will show that the mean flow time under this policy can be improved by using a \emph{constant} allocation during the time interval $[t_1,t_2]$, where the constant allocation is equal to the average value of $\bm{\theta}^*(t)$ during the interval $[t_1, t_2]$.

Specifically, consider the allocation function $\overline{\bm{\theta}^*(t)}$ where
$$\overline{\bm{\theta}^*(t)}=
    \begin{cases}
        \frac{1}{t_2 -t_1}\int_{t_1}^{t_2}\bm{\theta}^*(T)dT  &\forall t_1 \leq t \leq t_2\\
    \bm{\theta}^*(t)  &\mbox{otherwise}.
\end{cases}$$
Note that $\overline{\bm{\theta}^*(t)}$ is constant during the interval $[t_1,t_2]$.
    Furthermore, because $\sum_{i=1}^{m(t)}\theta_i^*(t) \leq 1$ for any time $t$, $\sum_{i=1}^{m(t)}\overline{\theta_i^*(t)} \leq 1$ at every time $t$ as well, and is therefore a feasible allocation function.
Because the speedup function, $s$, is a concave function, using the allocation function $\overline{\bm{\theta}^*(t)}$ on the time interval $[t_1,t_2]$ provides a greater (or equal) average speedup to every active job during this interval.  
Hence, the residual size of each job under the allocation function $\overline{\bm{\theta}^*(t)}$ at time $t_2$ is at most the residual size of that job under $\bm{\theta}^*(t)$ at time $t_2$.
There thus exists a policy which achieves an equal or lower total flow time than the unique optimal policy, but changes allocations only at departure times.
This is a contradiction.
\end{proof}
For the rest of the paper, we will therefore only consider allocation functions which change only at departure times.
The result of Theorem \ref{thm:constant} generalizes to the case where some policies which are optimal with respect to any completion order, $C$.
The same argument holds in this case because we can always improve a policy by ensuring that it changes only at departure times.

\subsection{The Scale-Free Property}
\label{sec:sf}
Consider any given completion order, $C$, and the policy $P$ which is optimal with respect to $C$.
Our goal is to characterize $P$ strongly enough that we can optimize over the space of all completion orders, $C$, and determine the optimal completion order $C^*$.
Hence, we now prove an interesting invariant of any policy $P$, which we call the scale-free property.
We will first need a preliminary lemma.
\begin{restatable}{lem}{scaling}
    \label{lem:scaling}
    Consider an allocation function $\bm{\theta}(t)$ which, at all times $t$ leaves $\beta$ fraction of the system unused.  That is,
    $$\sum_{i=1}^{m(t)} \theta_i(t) = 1-\beta \qquad \forall t.$$
    The total flow time under $\bm{\theta}(t)$ is equivalent to the total flow time under an allocation function $\bm{\theta}'(t)$ where
    $$\bm{\theta}'(t) = \frac{\bm{\theta}(t)}{1-\beta} \qquad \forall t$$
    in a system that runs at $\left(1-\beta\right)^p$ times the speed of the original system (which runs at rate 1).
\end{restatable}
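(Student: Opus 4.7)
The plan is to verify the claim by a direct calculation showing that at every time $t$, every job receives exactly the same instantaneous service rate under the two setups, so the two systems are pathwise identical. Since $s(k)=k^p$ and the number of servers is $N$, under $\bm{\theta}(t)$ job $i$ is served at rate
$$s\bigl(\theta_i(t)N\bigr) = \theta_i(t)^p N^p.$$
Under $\bm{\theta}'(t)=\bm{\theta}(t)/(1-\beta)$ in the slowed-down system, the instantaneous service rate seen by job $i$ is
$$(1-\beta)^p \cdot s\bigl(\theta_i'(t)N\bigr) = (1-\beta)^p \cdot \left(\frac{\theta_i(t)}{1-\beta}\right)^{\!p} N^p = \theta_i(t)^p N^p,$$
so the $(1-\beta)^p$ slowdown exactly cancels the $(1-\beta)^{-p}$ factor gained by packing the same allocation into the $(1-\beta)$-sized ``active'' portion of the system. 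This is the heart of the argument.

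First I would verify that $\bm{\theta}'(t)$ is a valid allocation function, i.e.\ that $\sum_{i=1}^{m(t)} \theta_i'(t) = \frac{1}{1-\beta}\sum_{i=1}^{m(t)} \theta_i(t) = 1$ at every $t$, using the hypothesis that $\bm{\theta}(t)$ leaves exactly $\beta$ fraction unused. Next I would compute the two service rates above and observe that they agree. Because the instantaneous service rate that each job $i$ receives is identical under the two setups at every $t$, the remaining-size trajectory $x_i(t)$ and hence the departure time $T_i$ of every job coincide. Summing over $i$ gives equality of the total flow times.

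Finally, I would note that the same argument works no matter which completion order $C$ the policy follows, since the identity of service rates holds job-by-job at every time; in particular, the set of completion times is preserved. The main ``obstacle'' is really just being careful about what ``runs at $(1-\beta)^p$ times the speed'' means — I will interpret it as multiplying the service rate delivered to each job by $(1-\beta)^p$, which is consistent with the usual convention that the per-server rate (taken to be $1$ in the model) is what gets rescaled. Beyond this bookkeeping, the lemma is essentially a one-line algebraic identity exploiting the homogeneity of $s(k)=k^p$.
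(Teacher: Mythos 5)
Your proposal is correct and takes essentially the same approach as the paper's proof: both verify that the $(1-\beta)^p$ slowdown exactly cancels the rescaling of the allocation by exploiting the homogeneity $s(ab)=s(a)\,s(b)$ of $s(k)=k^p$, so every job sees an identical instantaneous service rate at every time and the flow times coincide. Your additional check that $\bm{\theta}'(t)$ sums to $1$ is a harmless extra bit of bookkeeping the paper leaves implicit.
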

\begin{proof}
    Is straightforward, see Appendix \ref{sec:morescaling}.
\end{proof}
\noindent Using Lemma \ref{lem:scaling} we can characterize the policy $P$ which is optimal with respect to any completion order $C$.  Theorem \ref{thm:sf} states that a job's allocation relative to the jobs completed after it will remain constant for the job's entire lifetime.
\sf*
\begin{proof}
    We will prove this statement by induction on the overall number of jobs, $M$. 
    First, note that the statement is trivially true when $M=1$. 
    It remains to show that if the theorem holds for $M=k$, then it also holds for $M=k+1$.  
    
    Let $M=k+1$ and let $T^P_{i}$ denote the finishing time of job $i$ under the policy $P$.
    Recall that $P$ finishes jobs according to the completion order $C$, so $T^P_{c_{i+1}} < T^P_{c_i}$.  Consider a system which optimally processes $k$ jobs, which WLOG are jobs $c_1, c_2, \ldots, c_{M-1}$.  We will now ask this system to process an additional job, job $c_{M}$.
    From the perspective of the original $k$ jobs, there will be some constant portion of the system, $\theta^P_{c_M}$, used to process job $c_M$ on the time interval $[0,T^P_{c_M}]$.  The remaining $1-\theta^P_{c_M}$ fraction of the system will be available during this time period.  Just after time $T^P_{c_M}$, there will be $k$ jobs in the system, and hence by the inductive hypothesis the optimal policy will obey the scale-free property on the interval $(T^P_{c_{M-1}},T^P_{c_{1}}]$.
    
    Consider the problem of minimizing the total flow time of the $M$ jobs \emph{given any fixed value} of $\theta^P_{c_M}$ such that the completion order is obeyed.  We can write the total flow time of the $M$ jobs, $T^P$, as
    $$T^P=T^P_{c_M} + \sum_{j=1}^{M-1} T^P_{c_j}$$
    where $T^P_{c_M}$ is a constant.  Clearly, optimizing total flow time in this case is equivalent to optimizing the total flow time for $M=k$ jobs with the added constraint that $\theta^P_{c_M}$ is unavailable (and hence ``unused'' from the perspective of jobs $c_{M-1}$ through $c_{1}$) during the interval $[0, T^P_{c_M}]$.  By Lemma \ref{lem:scaling}, this is equivalent to having a system that runs at a fraction $\left(1-\theta^P_{c_M}\right)^p$ of the speed of a normal system during the interval $[0, T^P_{c_M}]$.

    Thus, for some $\delta>1$, we will consider the problem of optimizing total flow time for a set of $k$ jobs in a system that runs at a speed $\frac{1}{\delta}$ times as fast during the interval $[0,T^P_{c_M}]$.

    Let $T^P[x_{c_{M-1}}, \ldots, x_{c_{1}}]$ be the total flow time under policy $P$ of $k$ jobs of size $x_{c_{M-1}} \ldots x_{c_1}$.  Let $T^S[x_{c_{M-1}}, \ldots, x_{c_{1}}]$ be the total flow time of these jobs in a \emph{slow system} which always runs $\frac{1}{\delta}$ times as fast as a normal system.

If we let $T^S_i$ be the finishing time of job $i$ in the slow system, it is easy to see that 
$$T^P_i = \frac{T^S_i}{\delta}$$
since we can just factor out a $\delta$ from the expression for the completion time of every job in the slow system.  
    Furthermore, we see that
    $$T^P[x_{c_{M-1}}, \ldots, x_{c_{1}}] = \frac{T^S[x_{c_{M-1}}, \ldots, x_{c_{1}}]}{\delta}$$
    by the same reasoning.  Clearly, then, the allocation function which is optimal with respect to $C$ in the slow system, $\bm{\theta}^S(t)$, is equal to $\bm{\theta}^P(t)$ at the respective departure times of each job.  That is,
    $$\bm{\theta}^P(T^P_j) = \bm{\theta}^S(T^S_j) \qquad \forall 1 \leq j \leq M-1.$$
   
    We will now consider a \emph{mixed} system which is ``slow'' for some interval $[0,T^P_{c_M}]$ that ends before $T^S_{c_{M-1}}$, and then runs at normal speed after time $T^P_{c_M}$.
    Let $T^Z[x_{c_{M-1}}, \ldots, x_{c_{1}}]$ denote the total flow time in this mixed system and let $\bm{\theta}^Z(t)$ denote the allocation function which is optimal with respect to $C$ in the mixed system.
    We can write
    \begin{align*}
        T^Z[x_{c_{M-1}}, \ldots, x_{c_{1}}] &= k\cdot T^P_{c_M}\\
        &+ T^P[x_{c_{M-1}}-\frac{s(\theta_{c_{M-1}}^Z)T^P_{c_M}}{\delta}, \ldots x_{c_{1}} - \frac{s(\theta_{c_{1}}^Z)T^P_{c_M}}{\delta}].
    \end{align*}
Similarly we can write

    \begin{align*}T^S[x_{c_{M-1}}, \ldots, x_{c_{1}}] &= k\cdot T^P_{c_M}\\
        &+ T^S[x_{c_{M-1}}-\frac{s(\theta_{c_{M-1}}^S)T^P_{c_M}}{\delta}, \ldots x_{c_{1}} - \frac{s(\theta_{c_{1}}^S)T^P_{c_M}}{\delta}].
    \end{align*}

    Let $T^Z_{i}$ be the finishing time of job $i$ in the mixed system under $\bm{\theta}^Z(t)$.
    Since $k\cdot T^P_{c_M}$ is a constant not dependent on the allocation function, we can see that the optimal allocation function in the mixed system will make the same allocation decisions as the optimal allocation function in the slow system at the corresponding departure times in each system.  That is,
    $$\bm{\theta}^P(T^P_j) = \bm{\theta}^S(T^S_j) = \bm{\theta}^Z(T^Z_j) \qquad \forall 1 \leq j \leq M-1.$$
    
    By the inductive hypothesis, the optimal allocation function in the slow system obeys the scale-free property. Hence, $\bm{\theta}^Z(t)$ also obeys the scale-free property for this set of $k$ jobs given any fixed value of $\theta^P_{c_M}$.

    Let $T^P$ denote the total flow time in a system with $M=k+1$ jobs that runs at normal speed, where a $\theta^P_{c_M}$ fraction of the system is used to process job $c_M$.  For any value of $\theta^P_{c_M}$, we can write
    \begin{align*}
        T^P[x_{c_M}, x_{c_{M-1}}, \ldots, x_{c_{1}}] =&\\
        (k+1)\cdot T^P_{c_M} + T^P[x_{c_{M-1}}-&\frac{s(\theta_{c_{M-1}}^P)T^P_{c_M}}{\delta}, \ldots x_{c_{1}} - \frac{s(\theta_{c_{1}}^P)T^P_{c_M}}{\delta}].
    \end{align*}
    for some $\delta>1$.  Clearly, this expression would be minimized by $\bm{\theta}^Z(t)$ except for two things: the system runs at normal speed, and a $\theta^P_{c_M}$ fraction of the system is unavailable during the interval $[0,T^P_{c_M}]$.  By Lemma \ref{lem:scaling}, the mixed system and the system with unavailable servers are equivalent, and hence we can simply re-normalize all allocations from $\bm{\theta}^Z(t)$ by $1-\theta^P_{c_M}$ during this interval to account for this difference.  This yields the optimal allocation function in a normal speed system for a given value of $\theta^P_{c_M}$ on the interval $[0,T^P_{c_M}]$.  Since $\bm{\theta}^Z(t)$ obeys the scale-free property, the normalized allocation function clearly also obeys the scale-free property.  Hence, the optimal allocation function for processing the $M=k+1$ jobs obeys the scale-free property.  This completes the proof by induction.
\end{proof}
\begin{restatable}{definition}{sfimpact}
    \label{rem:sf}
    \noindent The scale-free property tells us that for any completion order $C$, under the policy $P$ which is optimal with respect to $C$, a job's allocation relative to the jobs completed after it is constant.  Hence, for any job, $c_i$, there exists a {\bf scale-free constant} $\omega^P_i$ where, for any $t<T^P_{c_i}$
$$\omega^P_{i} = \frac{\sum_{j=1}^{i-1} \theta^P_{c_j}(t)}{\theta^P_{c_i}(t)}.$$
    Note that we define $\omega_1 =0$.  Let $\bm{\omega}^P=(\omega^P_1, \omega^P_2, \ldots, \omega^P_M)$ denote the scale-free constants corresponding to each job.
\end{restatable}

\subsection{Finding the Optimal Completion Order, $C^*$}
\label{sec:comp}
We will now make use of the scale-free property to find the optimal completion order, $C^*$.
We again consider any given completion order, $C$, and the policy $P$ which is optimal with respect to $C$.
In Lemma \ref{lem:transform} below, we derive an expression for the total flow time under the policy $P$ as a function of $\bm{\omega}^P$.
Finally, in Theorem \ref{thm:sjf}, we minimize this expression over all completion orders, $C$, to find the optimal completion order, $C^*$.

\begin{restatable}{lem}{transform}
    \label{lem:transform}
Consider a policy $P$ which is optimal with respect to the completion order $C$.  We define
    $$\omega^P_{i} = \frac{\sum_{j=1}^{i-1} \theta^P_{c_j}(t)}{\theta^P_{c_i}(t)} \qquad \forall 1<i\leq M, 0\leq t<T^P_{c_i}$$
    and $\omega^P_1 = 0$. We can then write the total flow time under policy $P$ as function of $\bm{\omega}^P = (\omega^P_1, \omega^P_2, \ldots, \omega^P_M) $ as follows
\begin{eqnarray*}
T^P(\bm{\omega}^P)
    &=& \frac{1}{s(N)}\sum_{k=1}^M  x_{c_k} \cdot \bigl[k  s(1 + \omega^P_{k})  - (k-1) s(\omega^P_{k})\bigr]\\
\end{eqnarray*}
\end{restatable}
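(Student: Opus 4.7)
The plan is to compute the total flow time of $P$ explicitly by combining the two structural properties already established: Theorem \ref{thm:constant} (allocations are piecewise constant between departures) and the scale-free property of Theorem \ref{thm:sf}. I set up notation by letting $\beta_{k,i}$ denote the constant value of $\theta^P_{c_k}(t)$ on the interval in which exactly $i$ jobs remain, defined for $k\leq i$. Since $P$ is optimal with respect to $C$, the full system is used, so $\sum_{j=1}^{i}\beta_{j,i}=1$. Letting $\Delta_i$ denote the length of that interval, we have $T^P_{c_k}=\sum_{i=k}^M \Delta_i$.

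The first key computation is a closed form for $\beta_{k,i}$. The definition $\omega^P_k = \sum_{j=1}^{k-1}\beta_{j,i}/\beta_{k,i}$, valid for every $i\geq k$ by the scale-free property, implies the partial-sum identity $\sum_{j=1}^{k-1}\beta_{j,i} = \frac{\omega^P_k}{1+\omega^P_k}\sum_{j=1}^{k}\beta_{j,i}$. Unfolding this recurrence downward from the boundary condition $\sum_{j=1}^{i}\beta_{j,i}=1$ and differencing consecutive partial sums would yield
$$\beta_{k,i} \;=\; \frac{1}{1+\omega^P_k}\prod_{\ell=k+1}^{i}\frac{\omega^P_\ell}{1+\omega^P_\ell}.$$

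The second key step is to translate the work constraint into expressions for the $\Delta_i$'s. Using $s(x)=x^p$ so that $s(ab)=s(a)s(b)$, the fact that job $c_k$ completes exactly at $T^P_{c_k}$ gives $x_{c_k}=s(N)\sum_{i=k}^{M}\Delta_i\,\beta_{k,i}^{p}$. This is a triangular linear system in the $\Delta_i$, and substituting the product formula for $\beta_{k,i}^p$ makes most factors cancel telescopically, so that I expect to obtain
$$\Delta_k \;=\; \frac{1}{s(N)}\bigl[x_{c_k}\,s(1+\omega^P_k) \;-\; x_{c_{k+1}}\,s(\omega^P_{k+1})\bigr]$$
for $k<M$, with the convention $x_{c_{M+1}}=0$ handling $k=M$.

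Finally, because the length-$\Delta_i$ interval contributes to the sojourn of every one of the $i$ jobs then present, $T^P=\sum_{k=1}^{M} T^P_{c_k} = \sum_{i=1}^{M} i\cdot\Delta_i$; substituting the expression for $\Delta_i$ and reindexing $k\leftarrow i+1$ in the second half of the resulting sum produces the claimed formula. The main obstacle is the bookkeeping in the telescoping step for $\Delta_k$: once the product structure of $\beta_{k,i}$ is in hand, that collapse and the final reindexing are routine algebra driven by the multiplicativity of $s(x)=x^p$.
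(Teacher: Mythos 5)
Your proof is correct, and the formulas you assert at each stage all check out: the product formula $\beta_{k,i} = \frac{1}{1+\omega^P_k}\prod_{\ell=k+1}^{i}\frac{\omega^P_\ell}{1+\omega^P_\ell}$ follows from unfolding the partial-sum recurrence, the work-conservation system does telescope (since $\beta_{k,i}^p\,\omega_{i+1}^p = \beta_{k,i+1}^p(1+\omega_{i+1})^p$, each equation collapses to $\beta_{k,k}^p(1+\omega_k)^p x_{c_k} = x_{c_k}$), and the final reindexing yields the stated coefficient $k\,s(1+\omega_k)-(k-1)s(\omega_k)$. Your route differs from the paper's in how the interval lengths $\Delta_k$ are obtained. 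The paper introduces an auxiliary policy $P'$ that freezes every job's allocation at its time-zero value, reads off $T^{P'}_{c_k}=x_{c_k}/s(\theta^{P'}_{c_k}N)$ directly, and then converts the inter-departure gaps of $P'$ into those of $P$ via Lemma \ref{lem:scaling} (treating the already-departed jobs' capacity as ``unused''); this avoids ever writing down the allocations on the later intervals. You instead compute those allocations explicitly from the scale-free constants and recover the $\Delta_k$ by solving the triangular work system. Your version is more self-contained and makes the per-interval allocations visible (which is itself informative, since it foreshadows the closed form in Theorem \ref{thm:opt}), at the cost of heavier bookkeeping; the paper's version outsources the bookkeeping to the multiplicativity of $s$ through the scaling lemma. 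Both arguments share the same skeleton $T^P=\sum_i i\,\Delta_i$ and both rest on Theorem \ref{thm:constant} and Theorem \ref{thm:sf}, together with the (mild, optimality-justified) assumption that the full system is utilized on every interval. The only stylistic caveat is that phrases like ``would yield'' and ``I expect to obtain'' should be replaced by the actual two-line telescoping computation in a final write-up.
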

\begin{proof}
To analyze the total flow time under $P$ we will relate $P$ to a simpler policy, $P'$, which is much easier to analyze.
We define $P'$ to be
    $$\theta^{P'}_i = \theta^{P'}_i(t) = \theta^P_i(0) \qquad \forall 1 \leq i \leq M.$$
Importantly, each job receives some initial optimal allocation at time 0 which does not change over time under $P'$.
Since allocations under $P'$ are constant we have that
\begin{eqnarray*}
T^{P'}_k  &=& \frac{x_k}{s(\theta^{P'}_k)} .
\end{eqnarray*}
We can now derive equations that relate $T^P_k$ to $T^{P'}_k$.

By Theorem \ref{thm:sf}, during the interval $[T^{P'}_{c_{k+1}}, T^{P'}_{c_{k}}]$,  
$$\omega^P_i = \omega^{P'}_i \qquad \forall 1 \leq i \leq k.$$
Note that a fraction of the system $\sum_{i=k+1}^M \theta^{P'}_{c_i}$ is unused during this interval, and hence by Lemma \ref{lem:scaling}, we have that
\begin{eqnarray*}
T^{P'}_{c_k} - T^{P'}_{c_{k+1}} &=& \frac{T^P_{c_k} - T^P_{c_{k+1}}}{s(\theta^{P'}_{c_1} + \cdots + \theta^{P'}_{c_k})} \\
\end{eqnarray*}
    Let $\alpha_k = \frac{1}{s(\theta^{P'}_{c_1} + \cdots + \theta^{P'}_{c_k})}$ denote the scaling factor during this interval.

    If we define $x_{c_{M+1}} = 0$ and $T^P_{c_{M+1}}=0$, we can express the total flow time under policy $P$, $T^P$, as
\begin{align*}
    T^P &= M T^P_{c_M} + (M-1) (T^P_{c_{M-1}} - T^P_{c_M}) + \cdots + 2 (T^P_{c_{2}} - T^P_{c_{3}})  + (T^P_{1} - T^P_{c_{2}})\\
    &= \sum_{k=1}^M  k (T^P_{c_k} - T^P_{c_{k+1}})  \\
    &= \sum_{k=1}^M  k \frac{T^{P'}_{c_k} - T^{P'}_{c_{k+1}}}{\alpha_k} .
\end{align*}
    We can now further expand this expression in terms of the job sizes, using the fact that $s(ab)=s(a)\cdot s(b)$, as follows:
\begin{align*}
    T^P =& \sum_{k=1}^M  k \frac{\frac{x_{c_k}}{s(\theta^{P'}_{c_{k}}N)} - \frac{x_{c_{k+1}}}{s(\theta^{P'}_{c_{k+1}}N)}}{\alpha_k} \\
    =& \frac{1}{s(N)}\sum_{k=1}^M  k\cdot \bigl[x_{c_k} s(1+\omega^{P'}_k) -x_{c_{k+1}} s(\omega^{P'}_{k+1})\bigr]  \\
    =&\frac{1}{s(N)}\sum_{k=1}^M  x_{c_k}\cdot \bigl[k  s(1 + \omega^P_{k})  - (k-1) s(\omega^P_{k})\bigr]
\end{align*}
as desired.
\end{proof}

We now have an expression for the total flow time of a policy $P$ which is optimal with respect to a given completion order $C$.  Next, we use this expression to derive the optimal completion order, $C^*$, in Theorem \ref{thm:sjf}.

\sjf*
\begin{proof}
    Consider a policy $P$ that is optimal with respect to any given completion order $C$.  
    Let $T^P(\bm{\omega}^P)$ be the expression for total flow time for $P$ from Lemma \ref{lem:transform}.
    Our goal is to find a closed-form expression for $\bm{\omega}^P$, and then minimize $T^P(\bm{\omega}^P)$ over the space of possible completion orders.
    A sufficient condition for finding $\bm{\omega}^P$ is that jobs are completed according to $C$ \emph{and} that the following first-order conditions are satisfied. 
\begin{eqnarray*}
\frac{\partial T^P}{\partial \omega^P_{k}} = 0
\end{eqnarray*} 

    Note that the second order conditions are satisfied trivially.
    These first order conditions are sufficient, but not \emph{necessary}, since $P$ may lie on a boundary of the space which is imposed by the completion order.
For now, we will ignore the constraints on $\bm{\omega}^P$ imposed by the completion order.
    That is, we will consider minimizing the function $T^P$ without constraining $\bm{\omega}^P$ to follow the completion order $C$.
We refer to this as the \emph{relaxed minimization problem}, and we call the solution to this optimization the \emph{relaxed minimum} of the function $T^P$.
    Let $\bm{\Theta}^{P}(t)$ denote the allocation function of the relaxed minimum of $T^P$.
    Crucially, note that the value of $T^P$ under the relaxed minimum is a \emph{lower bound} on the total flow time \footnote{This value may not correspond to a total flow time under $\bm{\Theta}^{P}(t)$. It is just a value of the function $T^P$.} under $P$.
    We will use the solution to the relaxed minimization problem to argue about the optimal completion order, $C^*$.

The first order conditions give
$$ks'(1+\omega^P_{k}) - (k-1) s'(\omega^P_{k}) = 0 \qquad \forall 1 \leq k \leq M$$
and hence
$$\omega_{k}^P = \frac{1}{\left(\frac{k}{k-1}\right)^{\frac{1}{1-p}} - 1} \qquad \forall 1 < k \leq M$$

    We can show that the values of $\omega^P_k$ and $\bm{\Theta}^{P}_{c_k}(t)$ are increasing in $k$ (see Appendix \ref{sec:addl}).  Furthermore, the coefficient of $x_{c_k}$ in $T^P(\bm{\omega}^P)$ is
    $$k  s(1 + \omega^P_{k})  - (k-1) s(\omega^P_{k}),$$
    which is increasing in $k$ (see Appendix \ref{sec:addl}).
    This implies that the completion order which produces the best relaxed minimum is SJF, since SJF matches the smallest job sizes with the largest coefficients in $T^P$.
    In addition, since $\bm{\Theta}^{P}_{c_k}(t)$ is increasing in $k$ for any completion order, under this allocation function smaller jobs always have larger allocations than larger jobs.
    This implies that the relaxed minimum for $T^P$ under SJF respects the SJF completion order.
    Because the solution to the relaxed minimization problem is feasible, it is also optimal for the constrained minimization problem.
    Furthermore, because the SJF completion order has the best relaxed minimum, the relaxed minimum for the SJF completion order must be the optimal allocation function.
    We thus conclude that the SJF completion order is the completion order of the optimal policy.
\end{proof}
\subsection{Finding the Optimal Allocation Function}
\label{sec:opt}
Now that we know the optimal completion order, $C^*$, we can compute the optimal allocation function $\bm{\theta}^*(t)$.  This computation begins with the interesting observation that the optimal allocation function does not directly depend on the sizes of the $M$ jobs.  This is stated in Theorem \ref{thm:indep}.
\indep*
\begin{proof}
Recall from the proof of Theorem \ref{thm:sjf} that the optimal allocation function satisfies the following first order conditions
$$ks'(1+\omega^P_{k}) - (k-1) s'(\omega^P_{k}) = 0 \qquad \forall 1 \leq k \leq M$$
and always completes jobs in SJF order.
    Note that these conditions do not explicitly depend on any job sizes.  Hence, while the value of the optimal allocation function may depend on \emph{how many} jobs are in the system, given any two sets of jobs, $A$ and $B$, which consist of $m$ jobs at time $t$, the optimal allocation function for set $A$ will be equal to the optimal allocation function for set $B$ at time $t$.
\end{proof}
\noindent We now use our knowledge of the optimal completion order to derive the optimal allocation function in Theorem \ref{thm:opt}.

\opt*
\begin{proof}
    We can now solve a system of equations to derive the optimal allocation function.
    Consider a time, $t$, when there are $m(t)$ jobs in the system.  Since the optimal completion order is SJF, we know that the jobs in the system are specifically jobs $1,2, \ldots, m(t)$.  We know that the allocation to jobs $m(t) + 1, \ldots, M$ is 0, since these jobs have been completed.  Hence, we have that
    $$\theta^*_1 + \theta^*_2 + \ldots + \theta^*_{m(t)} = 1$$
    Furthermore we have $m(t) - 1$ constraints provided by the expressions for $\omega^*_2, \omega^*_3, \ldots, \omega^*_{m(t)}$.
\begin{eqnarray*}
    \omega^*_2 =  \frac{\theta^*_1(t)}{\theta^*_2(t)} ,  \omega^*_3 = \frac{\theta^*_2(t) + \theta^*_1(t)}{\theta^*_3(t)},
    \cdots,\omega^*_{m(t)} = \frac{\theta^*_{m(t)-1}(t) + \cdots + \theta^*_1(t)}{\theta^*_{m(t)}(t)},
\end{eqnarray*}
These can be written as
\begin{eqnarray*}
    \theta^*_1(t) &=& \omega^*_2 \theta^*_2(t) \\
    \theta^*_1(t) + \theta^*_2(t) &=& \omega^*_3 \theta^*_3(t) \\
\cdots\\
    \theta^*_1(t) + \theta^*_2(t) + \cdots + \theta^*_{m(t)-1}(t)  &=& \omega^*_{m(t)} \theta_{m(t)} \\
    \theta^*_1 + \theta^*_2 + \ldots + \theta^*_{m(t)} &=& 1\\
\end{eqnarray*}

And then rearranged as
\begin{eqnarray*}
    \theta^*_{m(t)}(t) &=& \frac{1}{1 + \omega^*_{m(t)}} \\
    \theta^*_{m(t)-1}(t) &=& \frac{\theta^*_{m(t)}(t) \omega^*_{m(t)}}{1 + \omega^*_{m(t)-1}} = \frac {\omega^*_{m(t)}}{(1+\omega^*_{m(t)-1})(1+\omega^*_{m(t)})} \\
\cdots \\
    \theta^*_2(t) &=& \frac{\theta^*_3(t) \omega^*_3}{1 + \omega^*_{2}} = \frac {\omega^*_3 \cdots \omega^*_{m(t)}}{(1+\omega^*_{2}) \cdots (1+\omega^*_{m(t)})} \\
    \theta^*_1(t) &=& \frac{\theta^*_2(t) \omega^*_2}{1 + \omega^*_{1}} = \frac {\omega^*_{2} \cdots \omega^*_{m(t)}}{(1+\omega^*_{1})\cdots(1+\omega^*_{m(t)})} \\ 
\end{eqnarray*}
    We can now plug in the known values of $\omega^*_i$ and find that
    $$\bm{\theta}^{*}_{i}(t) = \left( \frac{i}{m(t)} \right)^{\frac{1}{1 - p}} - \left( \frac{i -1}{m(t)} \right)^{\frac{1}{1 - p}} \forall 1 \leq i \leq m(t)$$
    This argument holds for any $m(t)\geq 1$, and hence we have fully specified the optimal allocation function.
\end{proof}

Taken together, the results of this section yield an expression for total flow time under the optimal allocation policy.  This expression is stated in Theorem \ref{thm:optt}.
\optt*
Throughout the rest of the paper we will refer to the optimal allocation policy as heSRPT.

\section{Discussion and Evaluation}
\label{sec:eval}
We will now put the results of Section \ref{sec:flow} in context by examining the allocation decisions made by the optimal allocation policy, heSRPT, and comparing the performance of heSRPT to existing allocation policies from the literature.
In Section \ref{sec:balance}, we examine how heSRPT balances overall system efficiency with a desire to bias towards small jobs.
In Section \ref{sec:compare}, we next perform a numerical evaluation of heSRPT and several other allocation policies.
Finally, in Section \ref{sec:limits}, we will discuss some limitations of our results.
\subsection{What is heSRPT Really Doing?}
\label{sec:balance}
To gain some intuition about the allocation decisions made by heSRPT, consider what our system would look like if the speedup parameter $p$ was equal to $1$.
In this case, the many server system essentially behaves as one perfectly divisible server.  
We know that the optimal allocation policy in this case is SRPT which gives strict priority the shortest job in the system at every moment in time.
Any allocation which does not give all resources to the smallest job will increase mean flow time.

When $p<1$, however, there is now a trade-off.
In this case, devoting all the resources to a single job will decrease the total service rate of the system greatly, since a single job will make highly inefficient use of these resources.
We refer to the \emph{system efficiency} as the total service rate of the system, scaled by the number of servers.
We see that one can actually \emph{decrease} the mean flow time by allowing some sharing of resources between jobs in order to increase system efficiency.
The heSRPT policy correctly balances this trade-off between getting short jobs out of the system quickly, and maintaining high system efficiency.

In examining the heSRPT policy, it is surprising that while a job's allocation depends on the \emph{ordering} of job sizes, it does not depend on the \emph{specific} job sizes.  
One might assume that, given two jobs, the optimal allocation should depend on size difference between the small job and the large job.  
However, just as SRPT gives strict priority to small jobs without considering the specific sizes of the jobs, heSRPT's allocation function does not vary with the job size distribution. 
The intuition for why this is true is that, given that the optimal completion order is SJF (see Theorem \ref{thm:sjf}), changing the size of a job will not affect the trade-off between SRPT and system efficiency.
As long as the \emph{ordering} of jobs remains the same, the same allocation will properly balance this trade-off.
Although the SRPT policy is similarly insensitive to the specific sizes of jobs, it is very surprising that we get the same insensitivity given parallelizable jobs in a multi-server system.

\subsection{Numerical Evaluation}
\begin{figure*}[ht]
\centering
\includegraphics[width=.9\textwidth]{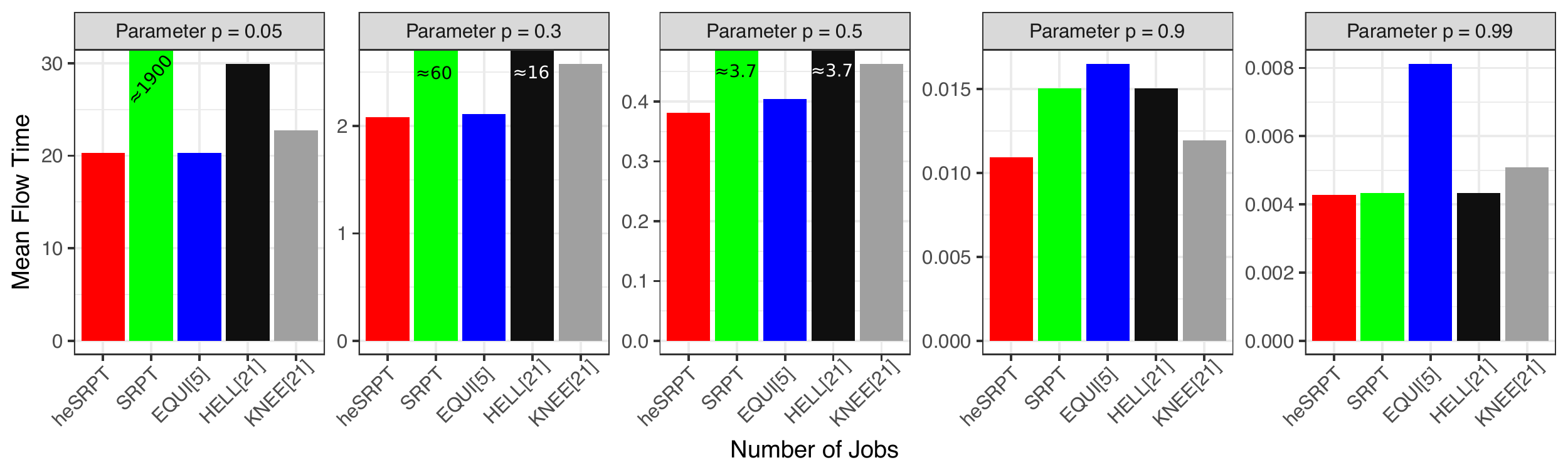}
    \caption{A comparison of heSRPT versus other allocation policies found in the literature.  Each policy is evaluated on a system of $N=1,000,000$ servers and a set of $M=500$ jobs whose sizes are drawn from a Pareto distribution with shape parameter $1.5$.  Each graph shows the mean flow time under each policy with various values of the speedup parameter, $p$.  The heSRPT policy outperforms every competitor by at least 30\% in at least one case.}
\label{fig:bars}
\end{figure*}

\label{sec:compare}
While heSRPT is the optimal allocation policy, it is still interesting to compare the mean flow time under this policy with the mean flow time under other policies from the literature.
Although we have a closed-form expression for the optimal total flow time under heSRPT (see Theorem \ref{thm:optt}), we wish to compare heSRPT to policies which do not necessarily emit closed-form expressions for total flow time.
Hence, we will perform a numerical analysis of these competitor policies on sets of jobs with randomly generated sizes, and for various levels of parallelizability (values of the speedup parameter $p$).

We compare heSRPT to the following list of competitor policies:
\begin{itemize}[leftmargin=.35cm]
    \item {\bf SRPT} allocates the entire system to the single job with shortest remaining processing time.  While this is known to be optimal when $p=1$ (see Section \ref{sec:balance}), we expect this policy to perform poorly when jobs make inefficient use of servers.
    \item {\bf EQUI} allocates an equal fraction of the system resources to each job at every moment in time.  This policy has been analyzed through the lens of competitive analysis \cite{edmonds2009scalably,Edmonds1999SchedulingIT} in similar models of parallelizable jobs, and was shown to be optimal in expectation when job sizes are unknown and exponentially distributed \cite{berg2018}.
        Other policies such as \emph{Intermediate-SRPT} \cite{im2016competitively} reduce to EQUI in our model where the number of jobs, $M$, is assumed to be less than the number of servers, $N$.
    \item {\bf HELL} is a heuristic policy proposed in \cite{lin2018model} which, similarly to heSRPT, tries to balance system efficiency with biasing towards short jobs.  HELL defines a job's efficiency to be the function $\frac{s(k)}{k}$.  HELL then iteratively allocates servers. In each iteration, HELL identifies the job which can achieve highest ratio of efficiency to remaining processing time, and allocates to this job the servers required to achieve this maximal ratio.  This process is repeated until all servers are allocated.  While HELL is consistent with the goal of heSRPT, the specific ratio that HELL uses is just a heuristic.
    \item {\bf KNEE} is the other heuristic policy proposed in \cite{lin2018model}.  
    KNEE considers the number of servers each job would require before its marginal reduction in run-time from receiving an additional server would fall below some threshold, $\alpha$.  
    The number of servers a job requires in order to reach this threshold is called a job's \emph{knee allocation}.
    KNEE then iteratively allocates servers.
    In each iteration, KNEE identifies the job with the lowest knee allocation and gives this job its knee allocation.
    This process is repeated until all servers are allocated.
    Because there is no principled way to choose this $\alpha$, we perform a brute-force search of the parameter space and present the results given the best $\alpha$ parameter we found.
    Hence, results for KNEE should be viewed as an optimistic prediction of the KNEE policy's performance.
\end{itemize}
We evaluate heSRPT and the competitor policies in a system of $N=1,000,000$ servers with a set of $M=500$ jobs whose sizes are drawn from a Pareto distribution with shape parameter $1.5$.  We set the speedup parameter $p$ to be 0.05, 0.3, 0.5, 0.9, or 0.99.  Each experiment is run with 10 different sets of randomly generated job sizes and we present the median of the mean flow times measured for each case.  The results of this analysis are shown in Figure \ref{fig:bars}.

We see that the optimal policy, heSRPT, outperforms every competitor allocation policy in every case as expected.
When $p$ is low, EQUI is very close to heSRPT, but EQUI is almost a factor of 2 worse than optimal when $p=0.99$.
Inversely, when $p=0.99$, SRPT is nearly optimal.
However, SRPT is an order of magnitude worse than heSRPT when $p=0.05$.
While HELL performs similarly to SRPT in most cases, it is only $50\%$ worse than optimal when $p=0.05$.
The KNEE policy is by far the best of the competitor policies that we consider.
In the worst case, when $p=0.3$, KNEE is roughly $30\%$ worse than heSRPT.
Note, however, that these results for KNEE require brute-force tuning of the allocation policy, and are thus optimistic about KNEE's performance.

\subsection{Limitations}
\label{sec:limits}
This paper derives the first closed-form of the optimal policy for allocating servers to parallelizable jobs of known size, yet we also acknowledge some limitations of our model.  
First, we note that in practice not all jobs will follow a single speedup function.  
Due to differences between applications or between input parameters, it will often be the case that one job is more parallelizable than another.
The complexity of allocating resources to jobs with multiple speedup functions has been noted in the literature \cite{berg2018, edmonds2009scalably}, and even when jobs follow a single speedup function the problem is clearly non-trivial.  
Next, we note that in our model all jobs are assumed to be present at time 0.  
While we believe that heSRPT might provide a good heuristic policy for processing a stream of arriving jobs, finding the optimal policy in the case of arrivals remains an open question for future work.  
Finally, we constrain the speedup functions in our model to be of the form $s(k) = k^p$, instead of accommodating more general functions.  
While Theorem \ref{thm:constant} can be shown to hold for any concave speedup function, it is unclear whether the other results of Section \ref{sec:flow} hold for arbitrary or even concave speedup functions.
Although our model makes the above assumptions, we believe that the general implication of our results holds --- the optimal allocation policy should bias towards short jobs while maintaining overall system efficiency.

\section{Conclusion}
Modern data centers largely rely on users to decide how many servers to use to run their jobs.
When jobs are parallelizable, but follow a sublinear speedup function, allowing users to make allocation decisions can lead to a highly inefficient use of resources.
We propose to instead have the system control server allocations, and we derive the first optimal allocation policy which minimizes mean flow time for a set of $M$ parallelizable jobs.
Our optimal allocation policy, heSRPT, leads to significant improvement over existing allocation policies suggested in the literature.
The key to heSRPT is that it finds the correct balance between overall system efficiency and favoring short jobs.
We derive an expression for the optimal allocation, at each moment in time, in closed-form.

\clearpage
\appendix
\section{Proof of Theorem \ref{thm:helrpt}}
\label{sec:moremake}
\helrpt*
\begin{proof}
    We wish to find $\bm{\gamma}^*(t)$ such that, for any jobs $i$ and $j$,
    $$\frac{x_i}{s(\gamma_i^*(t))} = \frac{x_j}{s(\gamma_j^*(t))} \qquad \forall t\leq T^*_{max}.$$
    This implies that for any $i$ and $j$
    $$\frac{x_i^{1/p}}{x_j^{1/p}} = \frac{\gamma^*_i(t)}{\gamma^*_j(t)}$$
    Furthermore, we know that
    $$\gamma_1^*(t) + \gamma_2^*(t) + \ldots + \gamma_M^*(t) = 1$$
    and thus, for any job $i$,
    $$\frac{x_i^{1/p} \gamma_i^*(t)}{x_i^{1/p}}+ \frac{x_2^{1/p} \gamma_i^*(t)}{x_i^{1/p}}+\ldots+ \frac{x_M^{1/p} \gamma_i^*(t)}{x_i^{1/p}}= 1$$
    Rearranging, we have
    $$\gamma^*_i(t) = \frac{x_i^{1/p}}{\sum_{j=1}^M x_j^{1/p}}$$
    as desired.
    Under this policy, the $T^*_{max}$ is equal to the completion time of any job, $i$, which is
    $$\frac{x_i}{s\biggl(\frac{x_i^{1/p}}{\sum_{j=1}^{M}x_j^{1/p}}\biggr)}= ||\bm{X}||_{1/p}$$
    as desired.

\end{proof}
\section{Proof of Lemma \ref{lem:scaling}}
\label{sec:morescaling}
\scaling*
\begin{proof}
    Consider the policy $\bm{\theta}'(t)$ in a system which runs at $(1-\beta)^p = s(1-\beta)$ times the speed of the original system.
    The service rate of any job in this system is
    $$ s(1-\beta) \cdot s(\bm{\theta}'(t) \cdot N) = s((1-\beta) \cdot \bm{\theta}'(t) \cdot N) = s(\bm{\theta}(t) \cdot N).$$
    Since the service rate of any job at any time $t$ is the same under $\bm{\theta}(t)$ as it is under $\bm{\theta}'(t)$ in a system which is $(1-\beta)^p$ times as fast, the systems are equivalent and will have the same mean flow time.
\end{proof}
\section{Additional Lemmas}
\label{sec:addl}
\begin{restatable}{lem}{om}
    \label{lem:om}
We define
    $$\omega_{k}^P = \frac{1}{\left(\frac{k}{k-1}\right)^{\frac{1}{1-p}} - 1} \qquad \forall 1 < k \leq M$$
    and $\omega^P_1 = 0$. Then $\omega^P_k$ is increasing for $1 \leq k \leq M$. 
\end{restatable}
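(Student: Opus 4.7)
The plan is to reduce the monotonicity of $\omega_k^P$ to a simple monotonicity statement about the ratio $k/(k-1)$, and then to handle the boundary case $k=1$ separately.

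First I would dispose of the $k=1$ to $k=2$ step: since $0 < p < 1$ we have $\frac{1}{1-p} > 1$, so $2^{1/(1-p)} > 2$, which yields $\omega_2^P = \frac{1}{2^{1/(1-p)}-1} > 0 = \omega_1^P$. Thus it remains to prove that $\omega_k^P$ is strictly increasing on the range $k \geq 2$.

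For the main range $k \geq 2$, I would observe the algebraic identity $\frac{k}{k-1} = 1 + \frac{1}{k-1}$, which makes it manifest that $k \mapsto \frac{k}{k-1}$ is strictly decreasing in $k$ and always strictly greater than $1$. Because $\frac{1}{1-p} > 0$, the map $x \mapsto x^{1/(1-p)}$ is strictly increasing on $(0,\infty)$, so composing preserves monotonicity: $k \mapsto \bigl(\tfrac{k}{k-1}\bigr)^{1/(1-p)}$ is strictly decreasing in $k$ and stays strictly greater than $1$. Subtracting $1$ gives a strictly decreasing, strictly positive function of $k$, and taking the reciprocal flips this to a strictly increasing function. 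Hence $\omega_k^P$ is strictly increasing for $k \geq 2$, and combined with the boundary step, $\omega_k^P$ is increasing for $1 \leq k \leq M$.

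I do not anticipate a real obstacle here; the statement is essentially a one-line calculus fact once the identity $\frac{k}{k-1} = 1 + \frac{1}{k-1}$ is exposed. The only subtlety to watch is the sign of the exponent $\frac{1}{1-p}$, which is guaranteed positive by the standing assumption $0 < p < 1$ made in the model, and the fact that $\bigl(\tfrac{k}{k-1}\bigr)^{1/(1-p)} - 1 > 0$ so that the reciprocal is well-defined and the direction of the inequality is preserved when inverting.
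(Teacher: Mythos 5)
Your proof is correct and follows essentially the same route as the paper's: observe that $\tfrac{k}{k-1}$ is decreasing in $k$, hence the denominator $\bigl(\tfrac{k}{k-1}\bigr)^{1/(1-p)} - 1$ is decreasing (and positive), so $\omega_k^P$ is increasing for $k \geq 2$, and then check the boundary case $\omega_1^P = 0 < \omega_2^P$ separately. You simply spell out the positivity of the denominator and the sign of the exponent more explicitly than the paper does, which is a harmless amount of extra care.
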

\begin{proof}
    Note that since $\frac{k}{k-1}$ is decreasing in $k$,
    the denominator of the expression for $\omega^P_k$ is decreasing for $1 < k \leq M$.
    Hence $\omega^P_k$ is increasing when $1 < k \leq M$, and since $\omega^P_1 = 0 < \omega^P_2$, $\omega^P_k$ is increasing for $1 \leq k \leq M$.
\end{proof}

\begin{restatable}{lem}{theta}
    Consider any completion order $C$ and let $P$ be the allocation policy which is optimal with respect to $C$.  Let $\Theta^P(t)$ be the allocation function of the relaxed minimum of $T^P$.  Then for any $t$, $\Theta^P_{c_k}(t)$ is increasing in $k$ for $1 \leq k \leq m(t)$.
\end{restatable}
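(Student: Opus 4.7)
The plan is to derive an explicit closed form for $\Theta^P_{c_k}(t)$ and then read off monotonicity from convexity of the power function $x \mapsto x^{1/(1-p)}$. The starting observation is that the chain of manipulations in the proof of Theorem \ref{thm:opt} that solves for $\theta^*_i(t)$ given the $\omega^*_i$ values and the normalization $\sum_i \theta^*_i(t) = 1$ never invokes the SJF completion order: the defining relations
\begin{equation*}
\omega^P_k \;=\; \frac{\Theta^P_{c_1}(t) + \cdots + \Theta^P_{c_{k-1}}(t)}{\Theta^P_{c_k}(t)}
\end{equation*}
and $\sum_k \Theta^P_{c_k}(t) = 1$ hold for any completion order, and the values $\omega^P_k$ from Lemma \ref{lem:om} depend only on $k$ and $p$. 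Re-running the identical back-substitution from Theorem \ref{thm:opt} therefore yields
\begin{equation*}
\Theta^P_{c_k}(t) \;=\; \left(\frac{k}{m(t)}\right)^{1/(1-p)} - \left(\frac{k-1}{m(t)}\right)^{1/(1-p)}, \qquad 1 \leq k \leq m(t).
\end{equation*}

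The statement then reduces to showing that $g(k) := k^{1/(1-p)} - (k-1)^{1/(1-p)}$ is increasing in $k$. Since $0 < p < 1$, the exponent $q := 1/(1-p)$ satisfies $q > 1$, so the function $h(x) = x^q$ is strictly convex on $[0,\infty)$. The finite-difference characterization of strict convexity says exactly that consecutive first differences $h(k) - h(k-1)$ are strictly increasing in $k$, which is precisely $g(k+1) > g(k)$. Dividing by the positive constant $m(t)^{1/(1-p)}$ preserves this inequality, so $\Theta^P_{c_{k+1}}(t) > \Theta^P_{c_k}(t)$.

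The only step that requires any care is the first paragraph's claim that the closed-form derivation of Theorem \ref{thm:opt} goes through verbatim for an arbitrary completion order $C$. This is essentially bookkeeping rather than a genuine obstacle: the only inputs to that computation are the $\omega^P_k$ (which depend only on $k$ and $p$) and the normalization sum, neither of which references the job sizes or their ordering. Once that observation is recorded, the monotonicity claim is a one-line appeal to strict convexity of $x^q$ for $q > 1$.
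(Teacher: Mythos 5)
Your proof is correct and follows essentially the same route as the paper's: both derive the closed form $\Theta^P_{c_k}(t) = \bigl(k/m(t)\bigr)^{1/(1-p)} - \bigl((k-1)/m(t)\bigr)^{1/(1-p)}$ by re-running the back-substitution of Theorem \ref{thm:opt}, and both conclude monotonicity from convexity of $x \mapsto x^{1/(1-p)}$. Your explicit remark that the back-substitution never uses the SJF ordering is a slightly more careful justification of a step the paper dispatches with ``following the same argument as Theorem \ref{thm:opt}.''
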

\begin{proof}
    Following the same argument as Theorem \ref{thm:opt}, we can see that the expression for $\Theta^P(t)$ is
    $$\Theta^P_i(t) = \left(\frac{i}{m(t)}\right)^{\frac{1}{1-p}} - \left(\frac{i-1}{m(t)}\right)^{\frac{1}{1-p}} \qquad \forall 1 \leq i \leq m(t).$$
    Note that $\frac{1}{1-p} > 1$, so $i^{\frac{1}{1-p}}$ is convex.  Hence, 
    $$i^{\frac{1}{1-p}} - (i-1)^{\frac{1}{1-p}}$$
    is increasing in $i$.  This implies that $\Theta^P_i(t)$ is increasing in $i$ for $1 \leq i \leq m(t)$.
\end{proof}

\begin{restatable}{lem}{coef}
We define
    $$\omega_{k}^P = \frac{1}{\left(\frac{k}{k-1}\right)^{\frac{1}{1-p}} - 1} \qquad \forall 1 < k \leq M.$$
    and $\omega_1^P = 0$.  Then the expression
    $$\Delta(k) = k  s(1 + \omega^P_{k})  - (k-1) s(\omega^P_{k})$$
    is increasing in $k$.
\end{restatable}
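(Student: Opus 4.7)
The plan is to substitute the explicit form of $\omega_k^P$ into $\Delta(k)$ and reduce it algebraically to a shape whose monotonicity is transparent. Let $q = 1/(1-p)$, so $q > 1$ and $pq = q-1$. Writing $r_k = (k/(k-1))^q$, the definition gives $\omega_k^P = 1/(r_k - 1)$, and hence $1 + \omega_k^P = r_k/(r_k - 1)$. With $s(x)=x^p$, this yields
\begin{equation*}
s(1+\omega_k^P) = \frac{r_k^p}{(r_k - 1)^p}, \qquad s(\omega_k^P) = \frac{1}{(r_k - 1)^p},
\end{equation*}
so that $\Delta(k) = \bigl(k\, r_k^p - (k-1)\bigr)/(r_k - 1)^p$.

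Next I would exploit the identity $pq = q-1$ to simplify numerator and denominator in parallel. From $r_k^p = (k/(k-1))^{q-1}$ one gets $k\, r_k^p = k^q/(k-1)^{q-1}$, while $r_k - 1 = (k^q - (k-1)^q)/(k-1)^q$ gives $(r_k - 1)^p = (k^q - (k-1)^q)^p/(k-1)^{q-1}$. The common factor $(k-1)^{q-1}$ cancels, and the numerator collapses to $k^q - (k-1)^q$, leaving the clean expression
\begin{equation*}
\Delta(k) = \bigl(k^q - (k-1)^q\bigr)^{1-p}.
\end{equation*}

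Monotonicity is then immediate: since $q > 1$, the function $t \mapsto t^q$ is strictly convex on $[0,\infty)$, so its forward difference $k^q - (k-1)^q = \int_{k-1}^{k} q\, t^{q-1}\, dt$ is strictly increasing in $k$ (the integrand $q t^{q-1}$ is itself increasing because $q - 1 > 0$). Because $1-p > 0$, raising to the power $1-p$ preserves monotonicity, so $\Delta(k)$ is strictly increasing in $k$. It remains to handle the boundary case $k=1$: since $\omega_1^P = 0$, $\Delta(1) = s(1) = 1$, and one checks directly that $\Delta(2) = (2^q - 1)^{1-p} > 1$, which gives the inequality $\Delta(1) < \Delta(2)$ and completes the proof.

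The only real obstacle is the algebraic simplification — the identity $pq = q-1$ must be used carefully so that the powers of $(k-1)$ cancel and the expression telescopes to $(k^q - (k-1)^q)^{1-p}$. Once this reduction is in hand, the monotonicity argument is essentially the convexity of a power function.
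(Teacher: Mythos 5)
Your proof is correct and follows essentially the same route as the paper: both hinge on the identity $\Delta(k) = \bigl(k^{q} - (k-1)^{q}\bigr)^{1-p}$ with $q = \tfrac{1}{1-p}$, after which monotonicity is immediate. The only differences are cosmetic --- you supply the algebraic reduction that the paper states without derivation, and you establish monotonicity via the forward difference of the convex map $t \mapsto t^{q}$ rather than by differentiating in $k$ as the paper does; both are valid, and your explicit check of the $k=1$ boundary case is a welcome extra.
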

\begin{proof}
    Let $c=\frac{1}{1-p}$.  We can rewrite $\Delta(k)$ as
    $$(k^c - (k-1)^c)^{1-p}.$$
    We then have that
    $$\frac{d}{dk} \Delta(k) = (1-p)(k^c - (k-1)^c)^{-p}(c(k^{c-1}-(k-1)^{c-1})).$$
    This derivative is positive for all $k$, and hence $\Delta(k)$ is increasing in $k$.
\end{proof}

\bibliography{bibliography}{}

\begin{thebibliography}{10}

\bibitem{abadi2016tensorflow}
Mart{\'\i}n Abadi, Paul Barham, Jianmin Chen, Zhifeng Chen, Andy Davis, Jeffrey
  Dean, Matthieu Devin, Sanjay Ghemawat, Geoffrey Irving, Michael Isard, et~al.
\newblock Tensorflow: a system for large-scale machine learning.
\newblock In {\em OSDI}, volume~16, pages 265--283, 2016.

\bibitem{Agrawal:2016:SPJ:2935764.2935782}
K.~Agrawal, J.~Li, K.~Lu, and B.~Moseley.
\newblock Scheduling parallelizable jobs online to minimize the maximum flow
  time.
\newblock In {\em Proceedings of the 28th ACM Symposium on Parallelism in
  Algorithms and Architectures}, SPAA '16, pages 195--205, New York, NY, USA,
  2016. ACM.

\bibitem{agrawal2018scheduling}
Kunal Agrawal, Jing Li, Kefu Lu, and Benjamin Moseley.
\newblock Scheduling parallelizable jobs online to maximize throughput.
\newblock In {\em Latin American Symposium on Theoretical Informatics}, pages
  755--776. Springer, 2018.

\bibitem{bampis2014note}
Evripidis Bampis, Dimitrios Letsios, and Giorgio Lucarelli.
\newblock A note on multiprocessor speed scaling with precedence constraints.
\newblock In {\em Proceedings of the 26th ACM symposium on Parallelism in
  algorithms and architectures}, pages 138--142. ACM, 2014.

\bibitem{berg2018}
B.~Berg, J.P. Dorsman, and M.~Harchol-Balter.
\newblock Towards optimality in parallel scheduling.
\newblock {\em ACM POMACS}, 1(2), 2018.

\bibitem{blumofe1999scheduling}
Robert~D Blumofe and Charles~E Leiserson.
\newblock Scheduling multithreaded computations by work stealing.
\newblock {\em Journal of the ACM (JACM)}, 46(5):720--748, 1999.

\bibitem{bodik2014brief}
Peter Bod{\'\i}k, Ishai Menache, Joseph~Seffi Naor, and Jonathan Yaniv.
\newblock Brief announcement: deadline-aware scheduling of big-data processing
  jobs.
\newblock In {\em Proceedings of the 26th ACM symposium on Parallelism in
  algorithms and architectures}, pages 211--213. ACM, 2014.

\bibitem{chekuri2001approximation}
Chandra Chekuri, Rajeev Motwani, Balas Natarajan, and Clifford Stein.
\newblock Approximation techniques for average completion time scheduling.
\newblock {\em SIAM Journal on Computing}, 31(1):146--166, 2001.

\bibitem{chen2018improved}
Chi-Yeh Chen.
\newblock An improved approximation for scheduling malleable tasks with
  precedence constraints via iterative method.
\newblock {\em IEEE Transactions on Parallel and Distributed Systems}, 2018.

\bibitem{chowdhury2014efficient}
Mosharaf Chowdhury, Yuan Zhong, and Ion Stoica.
\newblock Efficient coflow scheduling with varys.
\newblock In {\em ACM SIGCOMM Computer Communication Review}, volume~44, pages
  443--454. ACM, 2014.

\bibitem{chowdhury2013oblivious}
Rezaul~Alam Chowdhury, Vijaya Ramachandran, Francesco Silvestri, and Brandon
  Blakeley.
\newblock Oblivious algorithms for multicores and networks of processors.
\newblock {\em Journal of Parallel and Distributed Computing}, 73(7):911--925,
  2013.

\bibitem{chudak1999approximation}
Fabi{\'a}n~A Chudak and David~B Shmoys.
\newblock Approximation algorithms for precedence-constrained scheduling
  problems on parallel machines that run at different speeds.
\newblock {\em Journal of Algorithms}, 30(2):323--343, 1999.

\bibitem{du1989complexity}
Jianzhong Du and Joseph Y-T Leung.
\newblock Complexity of scheduling parallel task systems.
\newblock {\em SIAM Journal on Discrete Mathematics}, 2(4):473--487, 1989.

\bibitem{Edmonds1999SchedulingIT}
J.~Edmonds.
\newblock Scheduling in the dark.
\newblock {\em Theoretical Computer Science}, 235:109--141, 1999.

\bibitem{edmonds2009scalably}
J.~Edmonds and K.~Pruhs.
\newblock Scalably scheduling processes with arbitrary speedup curves.
\newblock In {\em Proceedings of the Twentieth Annual ACM-SIAM Symposium on
  Discrete Algorithms}, SODA '09, pages 685--692, New York, NY, USA, 2009. ACM.

\bibitem{hall1997scheduling}
Leslie~A Hall, Andreas~S Schulz, David~B Shmoys, and Joel Wein.
\newblock Scheduling to minimize average completion time: Off-line and on-line
  approximation algorithms.
\newblock {\em Mathematics of operations research}, 22(3):513--544, 1997.

\bibitem{Hill:2008:ALM:1449375.1449387}
M.~D. Hill and M.~R. Marty.
\newblock Amdahl's law in the multicore era.
\newblock {\em Computer}, 41:33--38, 2008.

\bibitem{im2016competitively}
Sungjin Im, Benjamin Moseley, Kirk Pruhs, and Eric Torng.
\newblock Competitively scheduling tasks with intermediate parallelizability.
\newblock {\em ACM Transactions on Parallel Computing (TOPC)}, 3(1):4, 2016.

\bibitem{jahanjou2017asymptotically}
Hamidreza Jahanjou, Erez Kantor, and Rajmohan Rajaraman.
\newblock Asymptotically optimal approximation algorithms for coflow
  scheduling.
\newblock In {\em Proceedings of the 29th ACM Symposium on Parallelism in
  Algorithms and Architectures}, pages 45--54. ACM, 2017.

\bibitem{khuller2016brief}
Samir Khuller and Manish Purohit.
\newblock Brief announcement: Improved approximation algorithms for scheduling
  co-flows.
\newblock In {\em Proceedings of the 28th ACM Symposium on Parallelism in
  Algorithms and Architectures}, pages 239--240. ACM, 2016.

\bibitem{lin2018model}
Sung-Han Lin, Marco Paolieri, Cheng-Fu Chou, and Leana Golubchik.
\newblock A model-based approach to streamlining distributed training for
  asynchronous sgd.
\newblock In {\em 2018 IEEE 26th International Symposium on Modeling, Analysis,
  and Simulation of Computer and Telecommunication Systems (MASCOTS)}, pages
  306--318. IEEE, 2018.

\bibitem{narlikar1999space}
Girija~J Narlikar and Guy~E Blelloch.
\newblock Space-efficient scheduling of nested parallelism.
\newblock {\em ACM Transactions on Programming Languages and Systems (TOPLAS)},
  21(1):138--173, 1999.

\bibitem{qiu2015minimizing}
Zhen Qiu, Cliff Stein, and Yuan Zhong.
\newblock Minimizing the total weighted completion time of coflows in
  datacenter networks.
\newblock In {\em Proceedings of the 27th ACM symposium on Parallelism in
  Algorithms and Architectures}, pages 294--303. ACM, 2015.

\bibitem{ren2016clairvoyant}
Runtian Ren and Xueyan Tang.
\newblock Clairvoyant dynamic bin packing for job scheduling with minimum
  server usage time.
\newblock In {\em Proceedings of the 28th ACM Symposium on Parallelism in
  Algorithms and Architectures}, pages 227--237. ACM, 2016.

\bibitem{shafiee2017brief}
Mehrnoosh Shafiee and Javad Ghaderi.
\newblock Brief announcement: a new improved bound for coflow scheduling.
\newblock In {\em Proceedings of the 29th ACM Symposium on Parallelism in
  Algorithms and Architectures}, pages 91--93. ACM, 2017.

\bibitem{smith1978new}
Donald~R Smith.
\newblock A new proof of the optimality of the shortest remaining processing
  time discipline.
\newblock {\em Operations Research}, 26(1):197--199, 1978.

\bibitem{verma2015large}
Abhishek Verma, Luis Pedrosa, Madhukar Korupolu, David Oppenheimer, Eric Tune,
  and John Wilkes.
\newblock Large-scale cluster management at google with borg.
\newblock In {\em Proceedings of the Tenth European Conference on Computer
  Systems}, page~18. ACM, 2015.

\bibitem{zhan2017parsec3}
X.~Zhan, Y.~Bao, C.~Bienia, and K.~Li.
\newblock {PARSEC3.0}: A multicore benchmark suite with network stacks and
  {SPLASH-2X}.
\newblock {\em ACM SIGARCH Computer Architecture News}, 44:1--16, 2017.

\bibitem{zhu2014minimizing}
Yuqing Zhu, Yiwei Jiang, Weili Wu, Ling Ding, Ankur Teredesai, Deying Li, and
  Wonjun Lee.
\newblock Minimizing makespan and total completion time in mapreduce-like
  systems.
\newblock In {\em INFOCOM, 2014 Proceedings IEEE}, pages 2166--2174. IEEE,
  2014.

\end{thebibliography}
\bibliographystyle{plain}

\end{document}